\documentclass[]{article}

\usepackage{amsmath,amssymb}
\usepackage{bm,enumerate}
\usepackage{amsthm}
\usepackage{graphicx}
\usepackage{array} %
\usepackage{enumitem} %
\usepackage{verbatim} %
\usepackage{subfig} %
\usepackage{algorithm}
\usepackage{algpseudocode}
\usepackage{fancyhdr}
\usepackage{natbib}

\usepackage{booktabs} %
\usepackage{fancyvrb}
\usepackage{authblk}

\usepackage{footnote}
\usepackage{hyperref}


\newcommand{\eat}[1]{}

\floatstyle{ruled}
\newfloat{algorithm}{tb}{loa}
\floatname{algorithm}{Algorithm}
\newtheorem{theorem}{Theorem}
\newtheorem{corollary}[theorem]{Corollary}

\newtheorem{lemma}[theorem]{Lemma}



\newcommand{\defn}{:=}
\DeclareFontFamily{U}{mathx}{\hyphenchar\font45}
\DeclareFontShape{U}{mathx}{m}{n}{
      <5> <6> <7> <8> <9> <10>
      <10.95> <12> <14.4> <17.28> <20.74> <24.88>
      mathx10
      }{}
\DeclareSymbolFont{mathx}{U}{mathx}{m}{n}
\DeclareMathSymbol{\bigtimes}{1}{mathx}{"91}






\def\var{{\rm Var}}
\def\E{{\ensuremath{\mathbb E}}}

\long\def\comment#1{}



\newcommand{\ical}{\ensuremath{\mathcal I}}
\newcommand{\jcal}{\ensuremath{\mathcal J}}

\begin{document}

\title{Data Sketches for Disaggregated Subset Sum and Frequent Item Estimation}

\author{Daniel Ting}
\affil{Tableau Software}

\maketitle

\begin{abstract}
	We introduce and study a new data sketch for processing massive datasets. It addresses two common problems: 1) computing a sum given arbitrary filter conditions and 2) identifying the frequent items or heavy hitters in a data set. For the former, the sketch provides unbiased estimates with state of the art accuracy. It handles the challenging scenario when the data is disaggregated. In this case, a per unit metric of interest can only be computed as an expensive pre-aggregation of the raw, disaggregated data. For example, the metric of interest may be total clicks per user while the raw data is a click stream containing multiple rows per user. Thus the sketch is suitable for use in a wide range of applications including computing historical click through rates for ad prediction, reporting user metrics from event streams, and measuring network traffic for IP flows.

We prove and empirically show the sketch has good properties for both the disaggregated subset sum estimation and frequent item problems. On i.i.d. data, it not only picks out the frequent items but  gives strongly consistent estimates for the proportion of each frequent item. 
For subset sum estimation, it asymptotically draws a probability proportional to size sample that is optimal for estimating the sum over the data. For non i.i.d. data, we show that it typically does much better than random sampling for the frequent item problem and never does worse. For subset sum estimation, we show that even for pathological sequences, the variance is close to that of an optimal sampling design. Empirically, despite the disadvantage of operating on disaggregated data, our method matches or bests priority sampling, a state of the art method on pre-aggregated data. When compared to uniform sampling, it performs orders of magnitude better on skewed data. We also propose extensions to the sketch that allow it to be used in combining multiple data sets, in distributed systems, and for time decayed aggregation.

\end{abstract}

\section{Introduction}
When analyzing massive data sets, even simple operations such as computing a sum or mean are costly and time consuming. These simple operations are frequently performed both by people investigating the data interactively and asking a series of questions about it as well as in automated systems which must monitor or collect a multitude of statistics.

Data sketching algorithms enable the information in these massive datasets to be efficiently processed, stored, and queried. 
This allows them to be applied, for example, in real-time systems, both for ingesting massive data streams or for interactive analysis.

In order to achieve this efficiency, sketches are designed to only answer a specific class of question, and there is typically error in the answer.
In other words, it is a form of lossy compression on the original data where one must choose what to lose in the original data. A good sketch makes the most efficient use of the data so that the  errors are minimized while having the flexibility to answer a broad range of questions of interest.
Some sketches, such as HyperLogLog, are constrained to answer very specific questions with extremely little memory. On the other end of the spectrum, sampling based methods such as %
coordinated sampling \cite{brewer1972selecting}, \cite{cohen2013coordinated} are able to answer almost any question on the original data but at the cost of far more space to achieve the same approximation error. 

We introduce a sketch, Unbiased Space Saving, that simultaneously addresses two common data analysis problems: the disaggregated subset sum problem and the frequent item problem.  
This makes the sketch more flexible than previous sketches that address one problem or the other. Furthermore, it is efficient as it provides state of the art performance on the disaggregated subset sum problem. On i.i.d. streams it has a stronger provable consistency guarantee for frequent item count estimation than previous results, and on non-i.i.d. streams it performs well both theoretically and empirically. 
In addition, we derive an error estimator with good coverage properties that allows a user to assess the quality of a disaggregated subset sum result.

The disaggregated subset sum estimation is a more challenging variant of the subset sum estimation problem  \cite{duffield2007priority}, the extremely common problem of computing a sum or mean over a dataset with arbitrary filtering conditions. 
In the disaggregated subset sum problem \cite{cohen2007sketching}, \cite{gibbons1998new} the data is "disaggregated" so that a per item metric of interest is split across multiple rows. %
For example in an ad click stream, the data may arrive as a stream of single clicks that are identified with each ad while the metric of interest is the total number of clicks per ad.
The frequent item problem is the problem of identifying the heavy hitters or most frequent items in a dataset. 
Several sketches exist for both these individual problems.
In particular, the Sample and Hold methods of \cite{cohen2007sketching}, \cite{estan2003new}, \cite{gibbons1998new} address the disaggregated subset sum estimation problem. 
Frequent item sketches include the Space Saving sketch \cite{metwally2005spacesaving},
Misra-Gries sketch \cite{misra1982frequent}, and Lossy Counting sketch
\cite{manku2002approximate}. 

Our sketch is an extension of the Space Saving frequent item sketch, and as such,
has stronger frequent item estimation properties than Sample and Hold. 
In particular, unlike Sample and Hold, theorem \ref{thm:frequent item} gives both that a frequent item will eventually be included in the sketch with probability 1, and that the proportion of times it appears will be consistently estimated
for i.i.d. streams. In contrast to frequent item sketches which are biased, our Unbiased Space Saving sketch gives unbiased estimates for any subset sum, including subsets containing no frequent items. 

Our contributions are in three parts: 1) the development of the Unbiased Space Saving sketch, 2) the generalizations obtained from understanding the properties of the sketch and the mechanisms by which it works, and 3) the theoretical and empirical results establishing the correctness and efficiency of the sketch for answering the problems of interest. In particular, the generalizations allow multiple sketches to be merged so that information from multiple data sets may be combined as well as allowing it to be applied in distributed system. Other generalizations include the ability to handle signed and real-valued updates as well as time-decayed aggregation.
We empirically test the sketch on both synthetic and real ad prediction data.
Surprisingly, we find that it even outperforms priority sampling, a method that requires pre-aggregated data.

This paper is structured as follows. First, we describe the disaggregated subset sum problem, some of its applications, and related sketching problems. We then introduce our sketch, Unbiased Space Saving, as a small but significant modification of the Space Saving sketch. We examine its relation to other frequent item sketches, and show that they differ only in a "reduction" operation. This is used to show that any unbiased reduction operation yields an unbiased sketch for the disaggregated subset sum estimation problem. 
The theoretical properties of the sketch are then examined. We prove its consistency for the frequent item problem and for drawing a probability proportional to size sample. We derive a variance estimator and show that it can be used to generate good confidence intervals for estimates. Finally, we present experiments using real and synthetic data.

\section{Two Sketching Problems}	
\section{Disaggregated subset sum problem}
Many data analysis problems consist of a simple aggregation over some filtering and group by conditions.

\begin{center}
	\begin{BVerbatim}
	SELECT sum(metric), dimensions
	FROM table
	WHERE filters
	GROUP BY dimensions
	\end{BVerbatim}
\end{center}

This problem has several variations that depend on what is known about the possible queries and about the data before the sketch is constructed.
For problems in which there is no group by clause and the set of possible filter conditions are known before the sketch is constructed,
counting sketches such as the CountMin sketch \cite{cormode2005countmin} and 
AMS sketch \cite{alon1999space} are appropriate. 
When the filters and group by dimensions are not known and arbitrary, the problem is the subset sum estimation problem. Sampling methods such as priority sampling \cite{duffield2007priority} can be used to solve it. These work by exploiting a measure of importance for each row and sampling important rows with high probability. For example, when computing a sum, the rows containing large values contribute more to the sum and should be retained in the sample.

The disaggregated subset sum estimation problem is a more difficult variant where there is little to no information about row importance and only a small amount of information about the queries. For example, many user metrics, such as number of clicks, are computed as aggregations over some event stream where each event has the same weight 1 and hence, the same importance. 
Filters and group by conditions can be arbitrary except for a  small restriction that one cannot query at a granularity finer than a specified unit of analysis. In the click example, the finest granularity may be at the user level.
One is allowed to query over arbitrary subsets of users but cannot query a subset of a single user's clicks. The data is "disaggregated" since the relevant per unit metric is split across multiple rows. We will refer to something at the smallest unit of analysis as an {\em item} to distinguish it from one row in the data.

Since pre-aggregating to compute per unit metrics does not reduce the amount of relevant information, it follows that the best accuracy one can achieve is to first pre-aggregate and then apply a sketch for subset sum estimation. This operation, however, is extremely expensive, especially as the number of units is often large. Examples of units include users and ad id pairs for ad click prediction, source and destination IP pairs for IP flow metrics, and distinct search queries or terms. Each of these have trillions or more possible units.

Several sketches based on sampling have been proposed that address the disaggregated subset sum problem. These include the bottom-k sketch \cite{cohen2007bottomk} which samples items uniformly at random,
the class of "NetFlow" sketches \cite{estan2004building},
and the Sample and Hold sketches \cite{cohen2007sketching}, \cite{estan2003new}, \cite{gibbons1998new}. Of these, the Sample-and-Hold sketches are clearly the best as they use strictly more information than the other methods to construct samples and maintain aggregate statistics.
We describe them in more depth in section \ref{sec:sample and hold}.

The Unbiased Space Saving sketch we propose throws away even less information than previous sketches. Surprisingly, this allows it to match the accuracy of priority sampling, a nearly optimal subset sum estimation algorithm \cite{szegedy2006dlt},
which uses pre-aggregated data. In some cases, our sketch achieves better accuracy despite being computed on disaggregated data.

\subsection{Applications}
The disaggregated subset sum problem has many applications. These include machine learning and ad prediction \cite{shrivastava2016time},
analyzing network data \cite{estan2004building}, \cite{cohen2007sketching},
detecting distributed denial of service attacks \cite{sekar2006lads},
database query optimization and join size estimation \cite{vengerov2015join}, 
as well as analyzing web users' activity logs or other business intelligence applications.

For example, in ad prediction the historical click-through rate and other historical data are among the most powerful features for future ad clicks \cite{he2014practical}.
Since there is no historical data for newly created ads, one may use historical click or impression data for previous ads with similar attributes such as the same advertiser or product category \cite{richardson2007predicting}. 
In join size estimation, it allows the sketch to estimate the size under the arbitrary filtering conditions that a user might impose.

It also can be naturally applied to hierarchical aggregation problems.
For network traffic data, IP addresses are arranged hierarchically. A network administrator may both be interested in individual nodes that receive or generate an excess of traffic or aggregated traffic statistics on a subnet. Several sketches have been developed to exploit hierarchical aggregations including \cite{cormode2008finding}, \cite{mitzenmacher2012hierarchical}, and \cite{zhang2004online}. Since a disaggregated subset sum sketch can handle arbitrary group by conditions, it can  compute the next level in a hierarchy.

\subsection{Frequent item problem}
The frequent item or heavy hitter problem is  related to the disaggregated subset sum problem. Our sketch is an extension of Space Saving, \cite{metwally2005spacesaving}, a frequent item sketch. 
Like the disaggregated subset sum problem, frequent item sketches are computed with respect to a unit of analysis that requires a partial aggregation of the data.
However, only functions of the most frequent items are of interest. Most frequent item sketches are  deterministic and have  deterministic guarantees on both the identification of frequent items and the error in the counts of individual items.
However, since counts in frequent item sketches are biased, further aggregation on the sketch can lead to large errors when bias accumulates as shown in section \ref{sec:pathological}.

Our work is based on a frequent item sketch, but applies randomization to achieve unbiased count estimates. This allows it to be used in subset sum queries. Furthermore, it  maintains good frequent item estimation properties as proved in theorems \ref{thm:frequent item} and \ref{thm:worst case}.

\section{Unbiased Space-saving}
Our sketch is based on the Space Saving sketch \cite{metwally2005spacesaving} used in frequent item estimation.
We will refer to it as Deterministic Space Saving to differentiate it from our randomized sketch.
For simplicity, we consider the case where the metric of interest is the count for each item. The Deterministic Space Saving sketch works by maintaining a list of $m$ bins labeled by distinct items. A new row with item $i$ increments $i$'s counter if it is in the sketch. Otherwise, the smallest bin is incremented, and its label is changed to $i$.
Our sketch introduces one small modification. If $\hat{N}_{min}$ is the count for the smallest bin, then only change the label with probability $1/(\hat{N}_{min} + 1)$. This change provably yields unbiased counts as shown in theorem \ref{thm:unbiased}. Algorithm \ref{fig:algo} describes these Space Saving sketches more formally.

\begin{algorithm}
	\begin{itemize}
		\item Maintain an $m$ list of $(item, count)$ pairs initialized to have count 0.
		\item For each new row in the stream, let $x_{new}$ be its item and increment the corresponding counter if the item is in the list. Otherwise, find the pair $(x_{min}, \hat{N}_{min})$ with the smallest count. Increment the counter and replace the item label with $x_{new}$ with probability $p$.
		\item For the original Space Saving algorithm $p=1$. For unbiased count estimates $p= 1/(\hat{N}_{min}+1)$.
	\end{itemize}
	\caption{Space-Saving algorithms}
	\label{fig:algo}
\end{algorithm}

\begin{table}
	\begin{tabular}{c|l}
		Notation & Definition \\ \hline
		$t$ & Number of rows encountered or time\\
		$\hat{N}_i(t)$ & Estimate for item $i$ at time $t$ \\
		$\hat{N}_{min}(t)$ & Count in the smallest bin at time $t$ \\
		$n_i, n_{tot}$ & True count for item $i$ and total over all items \\
		$\hat{N}_S, n_S$ & Estimated and true total count of items in $S$ \\
		$\mathbf{N}, \mathbf{n}$ & Vector of estimated and true counts \\
		$p_i$ & Relative frequency $n_i / n_{tot}$ of item $i$ \\
		$m$ & Number of bins in sketch \\
		$Z_i$ & Binary indicator if item $i$ is a label in the sketch \\
		$\pi_i$ & Probability of inclusion $P(Z_i = 1)$ \\
		$C_S$ & Number of items from set $S$ in the sketch
	\end{tabular}
	\label{tbl:symbols}
	\caption{Table of symbols}
\end{table}

\begin{theorem}
	\label{thm:unbiased}
	For any item $x$, the randomized Space-Saving algorithm in figure \ref{fig:algo} gives an unbiased estimate of the count of $x$. 
\end{theorem}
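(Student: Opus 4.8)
The plan is to proceed by induction on the number of rows $t$ processed, treating the stream of items as fixed so that the only randomness lies in the algorithm's coin flips. Let $\mathcal{F}_t$ denote the state of the sketch (its bin labels and counts) after $t$ rows, and let $\hat{N}_i(t)$ be the count in the bin labeled $i$, or $0$ if $i$ is absent. The statement I want is $\E[\hat{N}_i(t)] = n_i(t)$ for every item $i$ and every $t$, where $n_i(t)$ is the true count of $i$ among the first $t$ rows. The base case $t=0$ is immediate, since all bin counts and all true counts equal $0$.

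For the inductive step, note that the true count satisfies $n_i(t+1) = n_i(t) + \indicator(x_{new} = i)$, and $\E[\hat{N}_i(t)] = n_i(t)$ holds by hypothesis. By the tower property it therefore suffices to establish the one-step identity $\E[\hat{N}_i(t+1) - \hat{N}_i(t) \mid \mathcal{F}_t] = \indicator(x_{new} = i)$; since the stream is fixed, $\indicator(x_{new}=i)$ is a constant and taking an outer expectation then yields $\E[\hat{N}_i(t+1)] = n_i(t+1)$. I would first dispose of the easy case in which $x_{new}$ already labels a bin: that bin is incremented deterministically and no relabeling occurs, so the increment equals $\indicator(x_{new}=i)$ exactly.

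The substantive case is when $x_{new}$ is absent, so the smallest bin $(x_{min}, \hat{N}_{min})$ is incremented and relabeled to $x_{new}$ with probability $p = 1/(\hat{N}_{min}+1)$. Here I would split on the identity of $i$. For $i = x_{new}$, the estimate jumps from $0$ to $\hat{N}_{min}+1$ with probability $p$ and stays $0$ otherwise, giving expected increment $p(\hat{N}_{min}+1) = 1$. For $i = x_{min}$, the estimate increases by $1$ when no relabel occurs (probability $1-p$) and drops from $\hat{N}_{min}$ to $0$ when relabeling does occur (probability $p$), giving expected increment $(1-p)\cdot 1 + p\cdot(-\hat{N}_{min})$; substituting $p = 1/(\hat{N}_{min}+1)$ makes this exactly $0$. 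Every other bin is untouched and contributes increment $0$. Since the labels are distinct, each item occupies at most one bin, so these cases are exhaustive and disjoint, and in each the expected increment equals $\indicator(x_{new}=i)$.

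The crux of the argument is the $i = x_{min}$ calculation: the probability $p$ is calibrated precisely so that the expected loss from discarding the old label, $p\,\hat{N}_{min}$, cancels the $+1$ credited to that label when it survives. This exact cancellation is the only place the specific value of $p$ enters, and it is what separates the unbiased variant from Deterministic Space Saving, where $p=1$ forces a positive bias. The one remaining item is the degenerate case $\hat{N}_{min}=0$, where $p=1$ and relabeling always occurs; there the displaced label carried estimate $0$, so the computation above still returns the correct value and no separate treatment is required.
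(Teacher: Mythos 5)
Your proof is correct and follows essentially the same route as the paper's: both reduce the claim to showing the expected one-step increment of $\hat{N}_x$ is $1$ when $x$ is the arriving item and $0$ otherwise, with the key cancellation coming from the choice $p = 1/(\hat{N}_{min}+1)$ in the displaced-label case. Your version is slightly more explicit about the induction, the conditioning on the sketch state, and the degenerate case $\hat{N}_{min}=0$, but the substance is identical.
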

\begin{proof}
	Let $\hat{N}_x(t)$ denote the estimate for the count of $x$ at time $t$
	and $\hat{N}_{min}(t)$ be the count in the smallest bin. 
	We show that the expected increment to $N_x(t)$ is 1 if $x$ is the next item and 0 otherwise.
	Suppose $x$ is the next item. If it is in the list of counters, then it is incremented by exactly 1. Otherwise, it incremented by $\hat{N}_{min}(t)+1$ with probability $1/(\hat{N}_{min}(t)+1)$ for an expected increment of $1$.
	Now suppose $x$ is not the next item. The estimated count $\hat{N}_x(t)$ can only be modified if $x$ is the label for the smallest count. It is incremented with probability $\hat{N}_x(t)/ (\hat{N}_x(t)+1)$. Otherwise $\hat{N}_x(t+1)$ is updated to $0$.
	This gives the update an expected increment of 
	$\E \hat{N}_x(t+1) - \hat{N}_x(t) = (\hat{N}_x(t)+1) \hat{N}_x(t) / (\hat{N}_x(t)+1)  - \hat{N}_x(t) = 0$
	when the new item is not $x$.
\end{proof}

We note that although given any fixed item $x$, the estimate of its count is unbiased, each stored pair often contains an overestimate of the item's count.
This occurs since any item with a positive count will receive a downward biased estimate of 0 conditional on it not being in the sketch. Thus, conditional on an item appearing in the list, the count must be biased upwards.

\section{Related sketches and further generalizations}
Although our primary goal is to demonstrate the usefulness of the Unbiased Space-Saving sketch, we also try to understand the mechanisms by which it works and use this understanding to find extensions and generalizations. Readers only interested in the properties of Unbiased Space Saving may skip to the next section. 

In particular, we examine the relationships between Unbiased Space Saving and existing deterministic frequent items sketches as well as its relationship with probability proportional to size sampling. We show that existing frequent item sketches all share the same structure as an exact increment of the count followed by a size reduction. This size reduction is implemented as an adaptive sequential thresholding operation which biases the counts. 
Our modification replaces the thresholding operation with a subsampling operation. This observation allows us to extend the sketch. This includes endowing it with an unbiased merge operation that can be used to combine datasets or in distributed computing environments.

The sampling design in the reduction step may also be chosen to give the sketch different properties. For example, time-decayed sampling methods may be used to weight  recently occurring items more heavily. If multiple metrics are being tracked, multi-objective sampling \cite{cohen2015multi} %
may be used.

\subsection{Probability proportional to size sampling}
\label{sec:PPS}
Our key observation in generalizing Unbiased Space Saving is that the choice of label is a sampling operation. In particular, this sampling operation chooses the item with probability proportional to its size. We briefly review probability proportional to size sampling and priority sampling as well as the Horvitz-Thompson estimator which allows one to unbias the sum estimate from any biased sampling scheme. Probability proportional to size sampling (PPS) is of special importance for sampling for subset sum estimation as it is essentially optimal. Any good sampling procedure mimics PPS sampling.

For unequal probability samples, an unbiased estimator for the sum over the true population $\{x_i\}$ is given by the Horvitz-Thomson estimator 
$\hat{S} = \sum_i  \frac{x_i Z_i}{\pi_i} $
where $Z_i$ denotes whether $x_i$ is in the sample and $\pi_i = P(Z_i = 1)$ is the inclusion probability. When only linear statistics of the sampled items are computed, the item values may be updated $x^{new}_i = x_i / \pi_i$. 

When drawing a sample of fixed size, it is trivial to see that an optimal set of inclusion probabilities is given by $\pi_i \propto x_i$ when this is possible. In other words, it generates a probability proportional to size (PPS) sample.
In this case, each term in the sum is constant, so that the estimator is exact and has zero variance. 
When the data is skewed, drawing a truly probability proportional size sample may be impossible for sample sizes greater than 1. For example, given values $1,1,$ and $10$, any scheme to draw 2 items with probabilities exactly proportional to size has inclusion probabilities bounded by $1/10, 1/10,$ and $1$. The expected sample size is at most $12/10 < 2$. In this case, one often chooses inclusion probabilities $\pi_i = \min\{\alpha x_i, 1\}$ for some constant $\alpha$. The inclusion probabilities are proportional to the size if the size is not too large and 1 otherwise. 

Many algorithms exist for generating PPS samples. We briefly describe two as they are necessary for the merge operation given in section \ref{sec:merge}.
The splitting procedure of \cite{Deville1998SamplingSplitting} provides a class of methods to generate a fixed size PPS sample with the desired inclusion probabilities. Another method which approximately generates a PPS sample is priority sampling. Instead of exact inclusion probabilities which are typically intractable to compute, priority sampling generates a set of pseudo-inclusion probabilities. 

The splitting procedure is based on a simple recursion. At each step, 
the target distribution is split into a mixture of two simpler distributions. One flips a coin and based on the result, chooses to sample from one of the two simpler distribution. More formally, given a target vector of inclusion probabilities $\pi$ and two vectors of probabilities $\pi^{(0)}$ and  $\pi^{(1)}$ with $\pi = \alpha \pi^{(0)} + (1-\alpha) \pi^{(1)}$,
then drawing $D \sim Bernoulli(1-\alpha)$ and then drawing a sample with marginal inclusion probabilities $\pi^{(D)}$ gives a sample with inclusion probabilities matching the target $\pi$. There is great flexibility in choosing how to split, and when the split yields inclusion probabilities equal to 0 or 1, the subsequent sampling becomes easier. 

Priority sampling is a method that approximately draws a PPS sample. It generates a random priority $R_{i} = U_i / n_i$ for 
an item $i$ with value $n_i$. The values corresponding to the $m$ smallest priorities form the sample. Surprisingly, by defining the threshold $\tau$ be the $(m+1)^{th}$ smallest priority, it can be shown that for almost any function of just the samples, the expected value under this sampling scheme is the same as the expected value under independent $Bernoulli( \min\{1, n_i \tau_i\} )$ sampling. 

\subsection{Misra-Gries and  frequent item sketches}

The Misra-Gries sketch \cite{misra1982frequent}, \cite{demaine2002frequency}, \cite{karp2003simple} is a frequent item sketch and is isomorphic to the Deterministic Space Saving sketch \cite{agarwal2013mergeable}.
The only difference is that it decrements all counters rather than incrementing the smallest bin when processing an item that is not in the sketch. Thus, the count in the smallest bin for the Deterministic Space Saving sketch is equal to the total number of decrements in the Misra-Gries sketch.
Given estimates $\mathbf{\hat{N}}$ from a Deterministic Space Saving sketch,  
the corresponding estimated item counts for the Misra-Gries sketch are $\hat{N}^{MG}_i = (\hat{N}_i - \hat{N}_{min})_+$ where $\hat{N}_{min}$ is the count for the smallest bin and the operation $(x)_+$ truncates negative values to be 0. In other words, the Misra-Gries estimate is the same as the Deterministic Space Saving estimate soft thresholded by $\hat{N}_{min}$.
Equivalently, the Deterministic Space Saving estimates are obtained by adding back the total number of decrements $\hat{N}_{min}$ to any nonzero counter in the Misra-Gries sketch. 

The sketch has a deterministic error guarantee.
When the total number of items is $n_{tot}$ then the error for any item is at most $n_{tot} / m$.

Other frequent item sketches include the deterministic lossy counting and randomized sticky sampling sketches \cite{manku2002approximate}. We describe only lossy counting as sticky sampling has both worse practical performance and weaker guarantees than other sketches. 

A simplified version of Lossy counting applies the same decrement reduction as the Misra-Gries sketch but decrements occur at a fixed schedule rather than one which depends on the data itself. To count items with frequency $> N/m$, all counters are decremented after every $m$ rows. Lossy counting does not provide a guarantee that the number of counters can be bounded by $m$. In the worst case, the size can grow to $m \log (N/m)$ counters. Similar to the isomorphism between the Misra-Gries and Space-saving sketches, the original Lossy counting algorithm is recovered by adding the number of decrements back to any nonzero counter.

\subsection{Reduction operations}
Existing deterministic frequent item sketches differ in only the operation to reduce the number of nonzero counters. They all have the form described in algorithm \ref{alg:reduction} and have reduction operations that can be expressed as a thresholding operation. Although it is isomorphic to the Misra-Gries sketch, Deterministic Space Saving's reduction operation can also be described as collapsing the two smallest bins by adding the larger bin's count to the smaller one's. 

\begin{algorithm}
	\begin{itemize}
		\item Maintain current estimates of counts $\hat{\mathbf{N}}(t)$
		\item Increment $\hat{N}'_{x_{t+1}}(t+1) \gets \hat{N}_{x_{t+1}}(t) + 1$.
		\item $\hat{\mathbf{N}}(t+1) \gets ReduceBins(\hat{\mathbf{N}'}(t+1), t+1)$		
	\end{itemize}
	\caption{General frequent item sketching}
	\label{alg:reduction}
\end{algorithm}

Modifying the reduction operation provides the sketch with different properties. We highlight several uses for alternative reduction operations. 

The reduction operation for Unbiased Space Saving can be seen as a PPS sample on the two smallest bins. 
A natural generalization is to consider a PPS sample on all the bins.
We highlight three benefits of such a scheme.
First, items can be added with arbitrary counts or weights. Second, the sketch size can be reduced by multiple bins in one step. Third, there is less quadratic variation added by one sampling step, so error can be reduced. The first two benefits are obvious consequences of the generalization. To see the third,
consider when a new row contains an item not in the sketch, and let $\jcal$ be the set of bins equal to $\hat{N}_{min}$.
When using the thresholded PPS inclusion probabilities from section \ref{sec:PPS}, the resulting PPS sample has inclusion probability $\alpha = |\jcal|/(1 + |\jcal| \hat{N}_{min})$ for the new row's item and $\alpha \hat{N}_{min}$ for bins in $\jcal$. Other bins have inclusion probability $1$. After sampling, the Horvitz-Thompson adjusted counts are $1/ |\jcal| + \hat{N}_{min}$. Unbiased Space Saving is thus a further randomization to convert the real valued increment $1/|\jcal|$ over $|\jcal|$ bins to an integer update on a single bin. Since Unbiased Space Saving adds an additional randomization step, the PPS sample has smaller variance. The downside of this procedure, however, is that it requires real valued counters that require more space per bin.
The update cost when using the stream summary data structure \cite{metwally2005spacesaving} remains $O(1)$.

Changing the sampling procedure can also provide other desirable behaviors. 
Applying forward decay sampling \cite{cormode2009forward}
allows one to obtain estimates that weight recent items more heavily. Other possible operations include adaptively varying the sketch size in order to only remove items with small estimated frequency. 

Furthermore, the reduction step does not need to be limited strictly to subsampling. Theorem \ref{thm:unbiased reduction} gives that any unbiased reduction operation yields unbiased estimates. This generalization allows us to analyze Sample-and-Hold sketches.

\begin{theorem}
	\label{thm:unbiased reduction}
	Any reduction operation where the expected post-reduction estimates are equal to the pre-reduction estimates yields an unbiased sketch for the disaggregated subset estimation problem. More formally, if $\E(\hat{\mathbf{N}}(t) | S_{pre}(t)) = \hat{\mathbf{N}}_{pre}(t)$ where $S_{pre}(t), \hat{\mathbf{N}}_{pre}(t)$ are the sketch and estimated counts before reduction at time step $t$ and $\hat{\mathbf{N}}(t)$ is the post reduction estimate, then $\hat{\mathbf{N}}(t)$ is an unbiased estimator.
\end{theorem}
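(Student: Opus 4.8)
The plan is to reduce the claim to per-item unbiasedness and then lift it to arbitrary subset sums by linearity. Concretely, I would first aim to show that the full vector of estimates is unbiased, $\E(\hat{\mathbf{N}}(t)) = \mathbf{n}(t)$ for every $t$. Granting this, for any subset $S$ the estimate $\hat{N}_S(t) = \sum_{i \in S} \hat{N}_i(t)$ satisfies $\E \hat{N}_S(t) = \sum_{i \in S} \E \hat{N}_i(t) = \sum_{i \in S} n_i(t) = n_S(t)$, which is exactly the unbiasedness required by the disaggregated subset sum problem. Throughout, I would treat the data stream as fixed and take all expectations over the sketch's internal randomization (the coin flips used in the reduction), so that "unbiased" is with respect to that randomness for an arbitrary input sequence.

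I would then prove $\E(\hat{\mathbf{N}}(t)) = \mathbf{n}(t)$ by induction on the number of processed rows $t$. The base case $t=0$ is immediate: the sketch is empty, so every estimate and every true count is $0$. For the inductive step I would exploit the decomposition of the update in algorithm \ref{alg:reduction} into a deterministic exact increment followed by the reduction. Given the stream, row $t+1$ increments the true count of its item exactly, so $\mathbf{n}(t+1) = \mathbf{n}(t) + \mathbf{e}_{x_{t+1}}$, and the pre-reduction estimate obeys the identical deterministic relation $\hat{\mathbf{N}}_{pre}(t+1) = \hat{\mathbf{N}}(t) + \mathbf{e}_{x_{t+1}}$. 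The crux is the tower property: conditioning on the pre-reduction state $S_{pre}(t+1)$ and invoking the hypothesis $\E(\hat{\mathbf{N}}(t+1) \mid S_{pre}(t+1)) = \hat{\mathbf{N}}_{pre}(t+1)$ collapses the inner expectation, giving
\[
\E\hat{\mathbf{N}}(t+1) = \E\big(\E(\hat{\mathbf{N}}(t+1)\mid S_{pre}(t+1))\big) = \E\hat{\mathbf{N}}_{pre}(t+1) = \E\hat{\mathbf{N}}(t) + \mathbf{e}_{x_{t+1}}.
\]
Applying the inductive hypothesis $\E\hat{\mathbf{N}}(t) = \mathbf{n}(t)$ and the true-count recursion then yields $\E\hat{\mathbf{N}}(t+1) = \mathbf{n}(t) + \mathbf{e}_{x_{t+1}} = \mathbf{n}(t+1)$, closing the induction.

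The step I expect to require the most care is the conditioning. One must check that $S_{pre}(t+1)$ carries precisely the information that makes the increment deterministic and that makes the reduction hypothesis applicable, so that the tower property applies cleanly and the entire prior history of coin flips cannot reintroduce bias. This is exactly what conditioning on $S_{pre}$ in the hypothesis buys: it decouples the mean-preserving reduction at step $t+1$ from the past. Compared with theorem \ref{thm:unbiased}, which verifies the mean-preserving property by hand for the specific $1/(\hat{N}_{min}+1)$ relabeling, here that property is assumed as the hypothesis, so the only remaining work is to propagate it forward in time via the induction above.
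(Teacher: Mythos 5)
Your proof is correct and is essentially the paper's argument in unrolled form: the paper observes that $\hat{\mathbf{N}}(t) - \mathbf{n}(t)$ is a martingale (hence has constant expectation $0$), and your induction via the tower property on $S_{pre}(t+1)$, together with the identity $\hat{\mathbf{N}}_{pre}(t+1) = \hat{\mathbf{N}}(t) + \mathbf{e}_{x_{t+1}} $, is precisely the computation that establishes and exploits that martingale property. The only difference is presentational---you make explicit the base case, the linearity step for subset sums, and the fixed-stream conditioning that the paper leaves implicit.
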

\begin{proof}
	Since $\hat{\mathbf{N}}_{pre}(t) = \hat{\mathbf{N}}_{post}(t-1) + (\mathbf{n}(t) - \mathbf{n}(t-1))$,	
	it follows that $\hat{\mathbf{N}}(t) - \mathbf{n}(t)$ is a martingale with respect to the filtration adapted to $S(t)$. Thus, $\E \hat{\mathbf{N}}(t) = \mathbf{n}(t)$, and the sketch gives unbiased estimates for the disaggregated subset sum problem. 
\end{proof}

We also note that reduction operations can be biased. 
The merge operation on the Misra-Gries sketch given by \cite{agarwal2013mergeable} performs a soft-thresholding by the size of the $(m+1)^{th}$ counter rather than by 1. This also allows it to reduce the size of the sketch by more than 1 bin at a time. It can be modified to handle deletions and arbitrary numeric aggregations by making the thresholding operation two-sided so that negative values are shrunk toward 0 as well. In this case, we do not provide a theoretical analysis of the properties. 

Modifying the reduction operation also yields interesting applications outside of counting. In particular, a reduction operation on matrices can yield accurate low rank decompositions \cite{liberty2013matrixsketching}, \cite{ghashami2006sparsematrixsketching}.

\subsection{Sample and Hold}
\label{sec:sample and hold}
To our best knowledge, the current state of the art sketches designed to answer disaggregated subset sum estimation problems are the family of sample and hold sketches \cite{gibbons1998new}, \cite{estan2003new}, \cite{cohen2007sketching}. These methods can also be described with a randomized reduction operation.

For adaptive sample and hold \cite{cohen2007sketching}, the sketch maintains an auxiliary variable $p$ which represents the sampling rate. Each point in the stream is assigned a $U_i \sim Uniform(0,1)$ random variable, and the items in the sketch are those with $U_i < p$. If an item remains in the sketch starting from time $t_0$, then the counter stores the number of times it appears in the stream after the initial time. 
Every time the sketch becomes too large, the sampling rate is decreased so that under the new rate $p'$, one item is no longer in the sketch. 

It can be shown that unbiased estimates can be obtained by keeping a counter value the same with probability $p'/p$ and decrementing the counter by a random $Geometric(p')$ random variable otherwise. If a counter becomes negative, then it is set to 0 and dropped. Adding back the mean $(1-p')/p'$ of the $Geometric$ random variable to the nonzero counters gives an unbiased estimator. Effectively, the sketch replaces the first time an item enters the sketch with the expected $Geometric(p')$ number of tries before it successfully enters the sketch and it adds the actual count after the item enters the sketch. Using the memoryless property of $Geometric$ random variables, it is easy to show that the sketch satisfies the conditions of theorem \ref{thm:unbiased reduction}. It is also clear that one update step adds more error than Unbiased Space Saving 
as it potentially adds $Geometric(p')$ noise with variance $(1-p')/p'^2$ to every bin. Furthermore, the eliminated bin may not even be the smallest bin. Since $p'$ is the sampling rate, it is expected to be close to 0. 
By contrast, Unbiased Space Saving has bounded increments of $1$
for bins other than the smallest bin, and the only bin that can be removed is the current smallest bin.

The discrepancy is especially prominent for frequent items.
A frequent item in an i.i.d. stream for Unbiased Space Saving enters the sketch almost immediately, and the count for the item is nearly exact as shown in theorem \ref{thm:frequent item}. For adaptive sample and hold, the first $n_i(1-p')$ occurrences of item $i$ are expected to be discarded and replaced with a high variance $Geometric(p')$ random variable. Since $p'$ is typically small in order to keep the number of counters low, most of the information about the count is discarded.

Another sketch, step sample-and-hold, avoids the problem by maintaining counts for each "step" when the sampling rate changes. However, this is more costly both from storage perspective as well as a computational one. For each item in the sketch, computing the expected count takes time quadratic in the number of steps $J_i$ in which the step's counter for the item is nonzero, and storage is linear in $J_i$.

\subsection{Merging and Distributed counting}
\label{sec:merge}
The more generalized reduction operations allow for merge operations on the sketches.
Merge operations and mergeable sketches \cite{agarwal2013mergeable} are important since they allow a collection of sketches, each answering questions about the subset of data it was constructed on, to be combined to answer a question over all the data. 
For example, a set of frequent item sketches that give trending news for each country  can be combined to give trending news for Europe  as well as a multitude of other possible combinations. Another common scenario arises when sketches are aggregated across time. Sketches for clicks may be computed per day, but the final machine learning feature may combine the last 7 days. 

Furthermore, merges make sketches  more practical to use in real world systems.
In particular, they allow for simple distributed computation. In a map-reduce framework, each mapper can quickly compute a sketch, and only a set of small sketches needs to be sent over the network to perform an aggregation at the reducer. 

As noted in the previous section, the Misra-Gries sketch has a simple merge operation which preserves its deterministic error guarantee. It simply soft thresholds by the $(m+1)^{th}$ largest counter so that at most $m$ nonzero counters are left. Mathematically, this is expressed as  $\hat{N}^{new}_i = \left(\hat{N}^{(1)}_i + \hat{N}^{(2)}_i - \hat{N}^{combined}_{(m+1)}\right)_+$
where $\hat{N}^{(s)}_i$ is the estimated count from sketch $s$
and $\hat{N}^{combined}_{(m+1)}$ is the $(m+1)^{th}$ smallest nonzero value obtained by  summing the estimated counts from the two sketches.
Previously, no merge operation existed for Deterministic Space Saving except to first convert it to a Misra-Gries sketch. 
Theorem \ref{thm:unbiased reduction} shows that by replacing the pairwise randomization with priority sampling or some other sampling procedure still allows one to obtain an Unbiased Space Saving merge that can preserve the expected count in the sketch rather than biasing it downward.

The trade-off required for such an unbiased merge operation is that the sketch may detect fewer of the top items by frequency than the biased Misra-Gries merge. Rather than truncating and preserving more of the "head" of the distribution, it must move mass from the tail closer to the head. This is illustrated in figure \ref{fig:merge}.

\begin{figure}
	\begin{center}	
		\begin{tabular}{cc}
			\includegraphics[width=2.2in, height=1.8in]{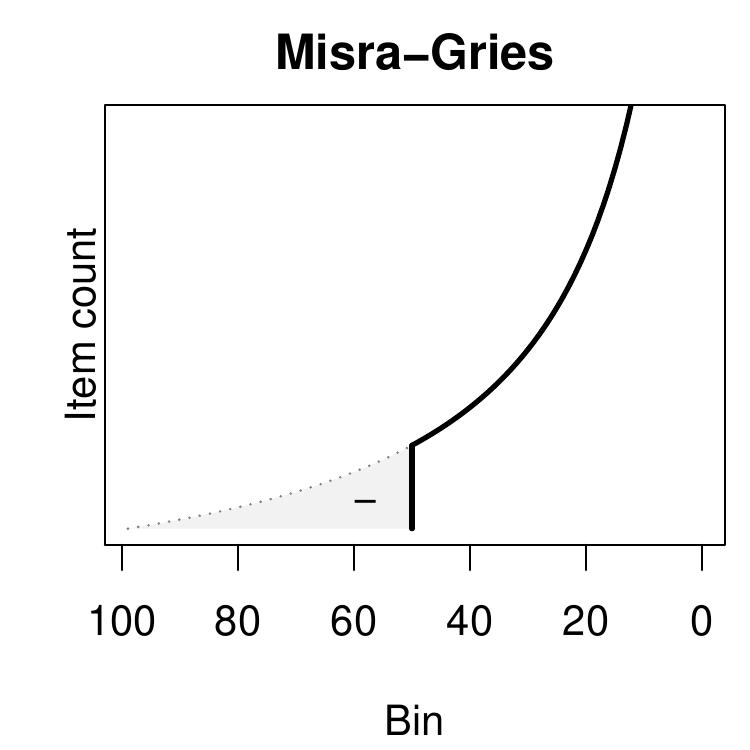} & 
			\includegraphics[width=2.2in, height=1.8in]{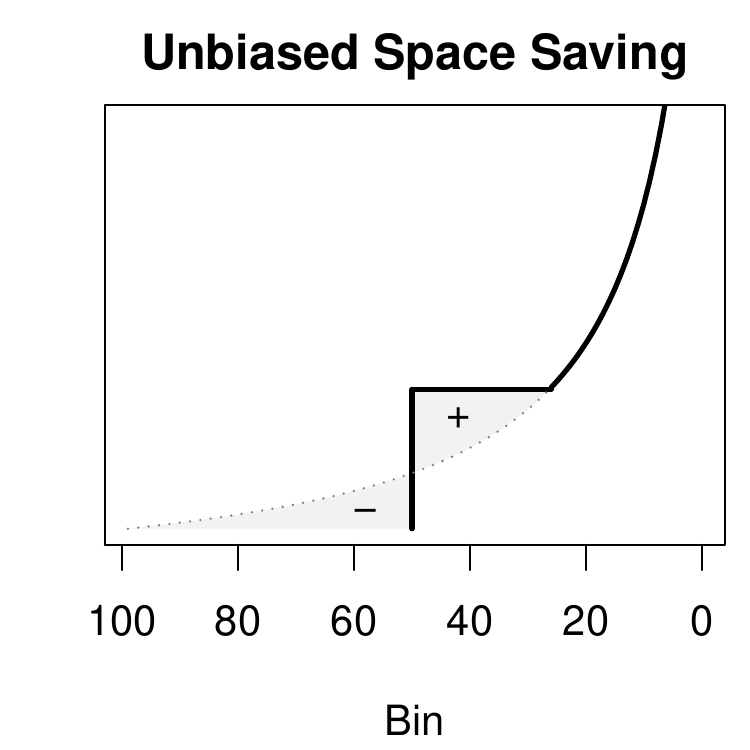} 
		\end{tabular}
	\end{center}
	\caption{In a merge operation, the Misra-Gries sketch simply removes mass from the extra bins with small count. Unbiased Space Saving moves the mass from infrequent items to moderately frequent items. It loses the ability to pick those items as frequent items in order to provide unbiased estimates for the counts in the tail.}
	\label{fig:merge}
\end{figure}

\section{Sketch Properties}
We study the properties of the space saving sketches here. These include provable asymptotic properties, variance estimates, heuristically and empirically derived properties, behavior on pathological and adversarial sequences, and costs in time and space. 
In particular, we prove that when the data is i.i.d., the sketch eventually includes all frequent items with probability 1 and that
the estimated proportions for these frequent items is consistent.
We prove there is a sharp transition between frequent items which are sampled with probability 1 eventually and infrequent items which are sampled with probability proportional to their sizes. This is also borne out in the experimental results where the observed inclusion probabilities match the theoretical ones and in estimation error where Unbiased Space Saving matches or even exceeds the accuracy of priority sampling.
In pathological cases, we demonstrate that Deterministic Space Saving completely fails at the subset estimation problem. Furthermore, these pathological sequences arise naturally. Any sequence where items' arrival rates change significantly over time forms a pathological sequence.
We show that we can derive a variance estimator as well. Since it works under pathological scenarios, the estimator is upward biased. However, we heuristically show that it is close to the variance for a PPS sample. This is confirmed in experiments as well.
For both i.i.d. and pathological cases, we examine the resulting empirical inclusion probabilities. Likewise, they behave similarly to a probability proportional to size or priority sample.

\subsection{Asymptotic consistency}
\label{sec:consistency}
Our main theoretical result for frequent item estimation states that the sketch contains all frequent items eventually on i.i.d. streams.
Thus it does no worse than Deterministic Space Saving asymptotically.
We also derive a finite sample bound in section \ref{sec:pathological}.
Furthermore, the guarantee states that the estimated proportion of times the item appears is strongly consistent and goes to 0. This is better than deterministic guarantees which only ensure that the error is within some constant.

Assume that items are drawn from a possibly infinite, discrete distribution with probabilities $p_1 \geq p_2 \geq \ldots$ and, without loss of generality, assume they are labeled by their index into this sequence of probabilities. Let $m$ be the number of bins and $t$ be the number of items processed by the sketch. We will also refer to $t$ as time.  Let
$\ical(t)$ be the set of items that are in the sketch at time $t$ and $Z_i(t) = 1(i \in \ical(t))$. To simplify the analysis, we will give a small further randomization by randomly choosing the smallest bin to increment when multiple bins share the same smallest count. 
Define an absolutely frequent item to be an item drawn with probability $> 1/m$ where $m$ is the number of bins in the sketch. 
By removing absolutely frequent items and decrementing the sketch size by 1 each time, the set of frequent item can be defined by the condition in corollary \ref{cor:joint frequent item} which depends only on the tail probability.
We first state the theorem and a corollary that immediately follows by induction.

\begin{theorem}
	\label{thm:frequent item}
	If $p_1 > 1/m$, then as the number of items $t \to \infty$,
	$Z_1(t) = 1$ eventually.
\end{theorem}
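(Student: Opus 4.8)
The plan is to reduce the statement to a one-dimensional excursion problem for the \emph{margin} $M(t) := \hat{N}_1(t) - t/m$ and to show that $M(t) > 0$ for all large $t$ almost surely. First I record the conservation invariant: every stream row increments exactly one counter by one, so $\sum_i \hat{N}_i(t) = t$ and therefore $\hat{N}_{min}(t) \le t/m$. Consequently, if $M(t) > 0$ then $\hat{N}_1(t) > t/m \ge \hat{N}_{min}(t)$, so at time $t$ item $1$ sits strictly above the minimum bin: it cannot be the relabeled bin, hence cannot be evicted at step $t+1$, and $\hat{N}_1$ then changes by $+1$ (if the arrival is item $1$) or by $0$. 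Since $M(t) > 0$ forces $\hat{N}_1(t) > 0$, i.e. $Z_1(t) = 1$, it suffices to prove that $M(t) > 0$ eventually.

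Next I compute the drift and localize the large jumps. By (the proof of) Theorem~\ref{thm:unbiased}, $\E[\hat{N}_1(t+1) - \hat{N}_1(t) \mid \mathcal{F}_t] = p_1$ in every state, so $\E[M(t+1) - M(t) \mid \mathcal{F}_t] = p_1 - 1/m =: \mu > 0$. The crucial structural observation is that any step with $|\hat{N}_1(t+1) - \hat{N}_1(t)| > 1$ is either an \emph{entry} (item $1$ was absent, so $\hat{N}_1(t) = 0$) or an \emph{eviction} (item $1$ labeled the chosen minimum bin, so $\hat{N}_1(t) = \hat{N}_{min}(t)$); both force $\hat{N}_1(t) \le \hat{N}_{min}(t) \le t/m$, i.e. $M(t) \le 0$. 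Hence, while $M(t) > 0$, the increments of $M$ lie in $\{1 - 1/m, -1/m\}$, are conditionally independent of the past with mean $\mu$, and the region $\{M > 0\}$ can only be left by a single downward step of size $1/m$.

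With this in hand I would run a ruin-plus-Borel--Cantelli argument. On the positive region $M$ behaves as a bounded-increment random walk with drift $\mu > 0$; choosing $\theta > 0$ with $\E e^{-\theta \xi} = 1$ (which exists since the increment $\xi$ has positive mean, is bounded, and takes a negative value), the process $e^{-\theta M}$ is a supermartingale there, and optional stopping gives the ruin bound $\Prob(M \text{ ever returns to } \le 0 \mid M = b) \le e^{-\theta b}$. The positive drift guarantees that the first-passage times $\tau_j := \inf\{t : M(t) \ge j\}$ are finite almost surely, and applying the ruin bound from each level $j$ via the strong Markov property yields return probabilities summable in $j$. Borel--Cantelli then shows that $\{M(t) \le 0\}$ occurs only finitely often, so $M(t) > 0$ for all large $t$; item $1$ is thereafter never evicted and $\hat{N}_1$ increases by one at each of its infinitely many future arrivals, whence $Z_1(t) = 1$ eventually (and in fact $\hat{N}_1(t)/t \to p_1$).

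I expect the main obstacle to be precisely the non-uniformly-bounded increments: entries and evictions move $\hat{N}_1$ by as much as $\Theta(t/m)$, so one cannot invoke an Azuma inequality or a bounded-increment martingale strong law directly on $\hat{N}_1(t) - n_1(t)$. The whole weight of the argument rests on the localization observation that these large jumps live only in $\{M \le 0\}$, converting the problem into controlling downward excursions of a positive-drift walk. Making the Borel--Cantelli step fully rigorous---establishing a.s.\ finiteness of the first-passage times $\tau_j$ and the clean strong-Markov restart needed for the summable ruin bounds---is the delicate part, and it is exactly here that the strict hypothesis $p_1 > 1/m$ (equivalently $\mu > 0$) is indispensable: at $p_1 = 1/m$ the margin walk is driftless and recurrent, and item $1$ would be evicted infinitely often.
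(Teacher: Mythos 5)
Your proposal is correct in its essentials but takes a genuinely different route from the paper. The paper argues by repeated attempts: Borel--Cantelli II shows item $1$ re-enters the sketch infinitely often; a lower bound on $\hat{N}_{min}(t)$ (Lemma~\ref{lem:lower bound}) plus Azuma's inequality shows each entry has a constant probability, uniform in the entry time, of ``escaping'' above the mean bin size before eviction and then never returning; a geometric-trials argument finishes. Your proof instead isolates the margin $M(t)=\hat{N}_1(t)-t/m$, observes that all unbounded jumps (entry and eviction) are confined to $\{M\le 0\}$ so that on $\{M>0\}$ the process is a two-valued walk with drift $\mu=p_1-1/m>0$, and combines an exponential ruin bound from level $j$ with first-passage finiteness. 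This is cleaner: it needs neither Lemma~\ref{lem:lower bound} nor Azuma, and it sidesteps the paper's somewhat delicate bookkeeping of the displacement probability during an escape attempt. (A minor point: you do not actually need a strong Markov property --- $M$ alone is not Markov --- only that the conditional increment law on $\{M(t)>0\}$ given $\mathcal{F}_t$ is the same two-point distribution for every history, which is what makes $e^{-\theta M}$ a supermartingale up to the exit time of $\{M>0\}$.)

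The one step you assert rather than prove is that $\tau_j=\inf\{t: M(t)\ge j\}$ is a.s.\ finite, and ``positive drift guarantees'' is not by itself a proof here, precisely because of the unbounded downward jumps at evictions. But your own setup closes this cleanly: since entry jumps land at $M\le 1-1/m$ and ordinary increments are at most $1-1/m$, the overshoot of $M$ above any level $j\ge 1$ is less than $1$, so $M(t\wedge\tau_j)<j+1$ for all $t$. Because $M(t)-\mu t$ is a martingale (the expected increment of $\hat{N}_1$ is $p_1$ in \emph{every} state, by the computation in Theorem~\ref{thm:unbiased}), optional stopping at the bounded time $t\wedge\tau_j$ gives $\mu\,\E[t\wedge\tau_j]=\E[M(t\wedge\tau_j)]\le j+1$, hence $\E[\tau_j]\le (j+1)/\mu<\infty$ and $\tau_j<\infty$ a.s. With that in hand, the ruin bound gives $P(M\text{ hits }\{\le 0\}\text{ after }\tau_j)\le e^{-\theta j}$; these events are nested in $j$, so their intersection has probability $0$ and a.s.\ some excursion never returns, which is the claim. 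So the argument is completable as written; just be aware that the first-passage step is where the real work hides and it must be done via optional stopping with the bounded-overshoot observation, not by appeal to drift alone.
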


\begin{corollary}
	\label{cor:joint frequent item}
	If $p_i / \sum_{j \geq i} p_j > 1/(m-i+1)$ for all $i < \kappa$ and for some $\kappa < m$,
	then $Z_i(t) = 1$ for all $i < \kappa$ eventually. 
\end{corollary}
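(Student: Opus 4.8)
The plan is to prove the corollary by induction on $i$, peeling off one frequent item at a time and reducing each stage to a fresh instance of Theorem \ref{thm:frequent item} run on a smaller sketch. The base case $i=1$ is immediate: since $\sum_{j\geq 1} p_j = 1$, the hypothesis at index $1$ reads $p_1 > 1/m$, which is exactly the assumption of Theorem \ref{thm:frequent item}, so $Z_1(t)=1$ eventually. The heart of the argument is a self-similar reduction. Once items $1,\dots,i-1$ have been permanently captured, they should occupy $i-1$ bins that are never again the smallest, so that the remaining $m-i+1$ bins evolve \emph{exactly} as an Unbiased Space Saving sketch of that size driven by the substream of arrivals of items $\geq i$. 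That substream is i.i.d.\ with renormalized probabilities $p_j^{(i)} = p_j/\sum_{k\geq i} p_k$, and the given hypothesis at index $i$ is precisely $p_i^{(i)} > 1/(m-i+1)$, i.e.\ the ``absolutely frequent'' condition for the reduced $(m-i+1)$-bin problem.

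The lemma enabling the reduction is that a captured item is eventually never the smallest bin. Since every update increments exactly one bin by $1$, the total count over all bins is always $t$; more precisely, once items $1,\dots,i-1$ are frozen, the remaining $\left(\sum_{k\geq i} p_k\right)t + o(t)$ of the total mass is spread over $m-i+1$ bins, so their minimum count grows at rate at most $\frac{1}{m-i+1}\sum_{k\geq i} p_k$. By the strong law of large numbers item $i$'s count grows at rate $p_i$, and the hypothesis $p_i > \frac{1}{m-i+1}\sum_{k\geq i} p_k$ forces item $i$'s count to strictly dominate this minimum past some a.s.-finite time. Hence item $i$ is never the smallest bin from then on, its label cannot change, and it is never evicted.

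For the inductive step, assume items $1,\dots,i-1$ are eventually always present and frozen, so the remaining $m-i+1$ bins form a size-$(m-i+1)$ Unbiased Space Saving sketch processing the i.i.d.\ substream of items $\geq i$ with marginals $p_j^{(i)}$. Applying Theorem \ref{thm:frequent item} to this reduced sketch, whose top probability $p_i^{(i)}$ exceeds $1/(m-i+1)$, yields $Z_i(t)=1$ eventually. I then upgrade ``captured'' to ``frozen'' via the count-growth lemma above, which re-establishes the inductive hypothesis at level $i+1$ (now with $m-i$ free bins and the substream of items $\geq i+1$). Because there are only finitely many levels $i=1,\dots,\kappa-1$, each contributing an a.s.-finite random threshold time, the intersection of these finitely many ``eventually always'' events again holds eventually, giving $Z_i(t)=1$ for all $i<\kappa$ simultaneously.

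I expect the main obstacle to be justifying that Theorem \ref{thm:frequent item} may be invoked on the reduced sketch even though, at the random time the higher-frequency items freeze, the $m-i+1$ remaining bins sit in an arbitrary nonzero configuration rather than the empty state the theorem nominally assumes. The resolution is to confirm that the proof of Theorem \ref{thm:frequent item} is asymptotic and insensitive to any finite initial configuration, since the head start carried by the initial counts washes out as the substream length tends to infinity, and that this length does tend to infinity a.s.\ because items $\geq i$ arrive with positive total probability $\sum_{k\geq i} p_k$. A secondary technical point is that conditioning on the frozen items and thinning the stream by item identity must preserve the i.i.d.\ structure the theorem requires; this holds because thinning an i.i.d.\ sequence by label yields an i.i.d.\ subsequence with the renormalized marginals $p_j^{(i)}$.
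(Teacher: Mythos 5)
Your proposal takes essentially the same route as the paper: the paper's own argument is exactly the induction you describe---peel off the absolutely frequent item, decrement the sketch size by one, renormalize the tail probabilities, and reapply Theorem \ref{thm:frequent item} to the reduced sketch. The paper compresses this to ``immediately follows by induction,'' so your freezing lemma and your treatment of the nonzero initial configuration and i.i.d.\ thinning are simply the details the paper leaves implicit, worked out correctly.
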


\begin{corollary}
	Given the conditions of corollary \ref{cor:joint frequent item},
	the estimate $\hat{p}_i(t) = \hat{N}_i(t) / t$ is strongly consistent for all $i < \kappa$ as $t \to \infty$.
\end{corollary}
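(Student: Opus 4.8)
The plan is to prove the stronger statement that $\hat{N}_i(t)/t \to p_i$ almost surely for each $i < \kappa$, from which strong consistency of $\hat{p}_i(t) = \hat{N}_i(t)/t$ is immediate. The two ingredients I would combine are (i) the strong law of large numbers for the i.i.d.\ stream, which gives $n_i(t)/t \to p_i$ almost surely, and (ii) Corollary \ref{cor:joint frequent item}, which supplies an almost surely finite random time $T_i$ after which $Z_i(t) = 1$ for all $t \geq T_i$, i.e.\ item $i$ is permanently resident in the sketch. It is also convenient to record the deterministic invariant $\sum_j \hat{N}_j(t) = t$, since every update increments exactly one bin by one; in particular $\hat{N}_{min}(t) \leq t/m$. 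The argument then sandwiches $\hat{N}_i(t)$ between $n_i(t)$ and $n_i(t)$ up to $o(t)$ corrections from both sides.

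For the lower bound, once $i$ is permanently resident (after $T_i$) its label is never removed, so its counter is never reset to $0$. Hence every subsequent occurrence of $i$ raises its counter by exactly one, giving $\hat{N}_i(t) \geq \hat{N}_i(T_i) + (n_i(t) - n_i(T_i)) \geq n_i(t) - n_i(T_i)$ for $t \geq T_i$. Dividing by $t$ and using that $T_i$ is finite so $n_i(T_i)/t \to 0$, this yields $\liminf_t \hat{N}_i(t)/t \geq p_i$ almost surely.

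The upper bound is the main obstacle. After $T_i$ the counter of $i$ can still grow through \emph{false} increments, i.e.\ steps where a new unseen item arrives, bin $i$ happens to be the smallest bin, and the relabelling coin fails so the count rises by one without $i$ occurring. I would show these false increments cease almost surely, giving $\hat{N}_i(t) = n_i(t) + O(1)$. A false increment requires $\hat{N}_i(t) = \hat{N}_{min}(t)$, whereas the lower bound already gives $\hat{N}_i(t) \gtrsim p_i t$. For $i = 1$ this is immediate, since $\hat{N}_{min}(t) \leq t/m$ while $p_1 > 1/m$ forces $\hat{N}_1(t) > \hat{N}_{min}(t)$ for all large $t$. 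For general $i$ I would induct on $i$, mirroring the induction behind Corollary \ref{cor:joint frequent item}: assuming $\hat{N}_j(t)/t \to p_j$ for all $j < i$, the top $i-1$ bins carry mass $(\sum_{j<i} p_j)\, t\, (1 + o(1))$, so by the total-mass invariant the remaining $m-i+1$ bins carry $(\sum_{j \geq i} p_j)\, t\, (1 + o(1))$, and their minimum is at most their average $(\sum_{j \geq i} p_j)\, t / (m-i+1)\, (1 + o(1))$. Since bin $i$ is among these remaining bins, if it were the global minimum its count would be bounded by this average; but the hypothesis $p_i / \sum_{j \geq i} p_j > 1/(m-i+1)$ makes $p_i t$ strictly exceed that average for all large $t$, a contradiction. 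Hence false increments to bin $i$ stop, giving $\limsup_t \hat{N}_i(t)/t \leq p_i$.

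Combining the two bounds yields $\hat{N}_i(t)/t \to p_i$ almost surely, and therefore $\hat{p}_i(t) \to p_i$. The delicate points I expect to need care with are, first, that the $o(t)$ bookkeeping be uniform across the finitely many indices $j \le i$ so the induction runs cleanly, which I would handle by taking $T = \max_{j < \kappa} T_j$ and observing that each top bin receives only finitely many false increments; and second, confirming that "bin $i$ is the global minimum" is exactly the trigger for a non-genuine increment, so that ruling it out eliminates all growth of $\hat{N}_i$ beyond $n_i$.
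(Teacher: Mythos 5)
Your proof is correct and rests on the same two pillars as the paper's own argument: corollary \ref{cor:joint frequent item} supplies an almost surely finite time after which item $i$ is permanently resident in the sketch, and the strong law of large numbers then drives $n_i(t)/t \to p_i$. The difference lies in how much work is assigned to the step ``after stickiness the count is exact.'' The paper disposes of it in one sentence --- once sticky, occurrences of $i$ are counted exactly --- which is only literally true if, after that time, bin $i$ is never again the minimum bin: under the update rule the minimum bin's counter is incremented by an arrival of an unseen item even when the relabelling coin fails, so a permanently resident label can still accrue spurious increments. You isolate exactly this issue, split the claim into a $\liminf$ bound (every genuine occurrence after permanence increments the counter by one) and a $\limsup$ bound (spurious increments eventually cease), and close the latter by induction on $i$ using the invariant $\sum_j \hat{N}_j(t) = t$ together with the hypothesis $p_i/\sum_{j \geq i} p_j > 1/(m-i+1)$. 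This buys rigor that the paper's one-line proof lacks: the paper's reading works only if ``sticky'' is taken to mean the stronger event actually constructed in the proof of theorem \ref{thm:frequent item} (that $\hat{N}_i(t) - t/m$ stays positive forever, which does preclude bin $i$ from ever being the minimum again), whereas your argument is self-contained given only the stated conclusion of corollary \ref{cor:joint frequent item}. The price is length; the payoff is that the upper bound is actually proved rather than asserted.
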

\begin{proof}
	Suppose item $i$ becomes sticky after $t_0$ items are processed. After $t_0$, the number of times $i$ appears is counted exactly correctly. As $t \to \infty$, the number of times $i$ appears after $t_0$ will dominate the number of times it appears before $t_0$. By the strong law of large numbers, the estimate is strongly consistent.
\end{proof}

\begin{lemma}
	\label{lem:lower bound}
	Let $\alpha = \sum_{j > m} p_j$. For any $\alpha' < \alpha$,
	$N_{min}(t) > \alpha' t / m$ eventually as $t \to \infty$.
\end{lemma}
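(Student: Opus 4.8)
The plan is to convert the desired lower bound on $\hat N_{min}(t)$ into an upper bound on the total ``overcounted'' mass held in the sketch, which can then be controlled by the strong law of large numbers. Two elementary facts drive the argument. First, conservation of total count: since every row increments exactly one bin by $1$ (the item's own bin on a hit, the smallest bin on a miss, regardless of whether a relabeling occurs), we have $\sum_{i \in \ical(t)} \hat N_i(t) = t$ at all times. Second, the Space Saving overcount bound
\begin{equation}
\hat N_i(t) - \hat N_{min}(t) \le n_i(t),
\end{equation}
equivalently $\hat N_i(t) \le n_i(t) + \hat N_{min}(t)$, where $\hat N_i(t) = 0$ when $i \notin \ical(t)$. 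I would establish this by induction on $t$: it holds trivially when $i \notin \ical(t)$; when $i$ first acquires a bin through a miss, its count becomes $\hat N_{min}+1$ while that same arrival forces $n_i \ge 1$, so the gap is at most $\hat N_{min}$; and on every subsequent step the bin counts are nondecreasing, while any increment of $i$'s bin coming from an occurrence of $i$ is matched by an increment of $n_i$. The only case invoking the randomization is when $i$'s bin is smallest and is incremented on a miss but keeps its label; there $n_i \ge 1$ because a bin is labeled $i$ only after an occurrence of $i$, so the inequality is preserved.

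Combining the two facts and using $\abs{\ical(t)} = m$ gives the central estimate
\begin{equation}
t = \sum_{i \in \ical(t)} \hat N_i(t) \le \sum_{i \in \ical(t)} \bigl(n_i(t) + \hat N_{min}(t)\bigr) = m\,\hat N_{min}(t) + \sum_{i \in \ical(t)} n_i(t) \le m\,\hat N_{min}(t) + S_m(t),
\end{equation}
where $S_m(t)$ denotes the sum of the $m$ largest true counts $n_{(1)}(t) \ge n_{(2)}(t) \ge \cdots$ at time $t$, the last inequality holding because $\ical(t)$ is merely \emph{some} set of $m$ items. Rearranging yields $\hat N_{min}(t) \ge (t - S_m(t))/m$, so it suffices to show $S_m(t)/t \to 1 - \alpha$ almost surely; then $\hat N_{min}(t)/(t/m) \ge 1 - S_m(t)/t \to \alpha$, and the claim $\hat N_{min}(t) > \alpha' t/m$ for every $\alpha' < \alpha$ follows eventually.

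The remaining step, the almost sure convergence $S_m(t)/t \to \sum_{i=1}^m p_i = 1 - \alpha$, is where the only genuine care is needed and is what I would flag as the main obstacle. The bound $\liminf S_m(t)/t \ge 1-\alpha$ is immediate from the SLLN applied to the fixed head $\{1,\dots,m\}$, since $n_i(t)/t \to p_i$ a.s. and $S_m(t) \ge \sum_{i=1}^m n_i(t)$. For the matching upper bound one must rule out a tail item outscoring a head item in the limit: given $\e > 0$, choose $N \ge m$ with $\sum_{j>N} p_j < \e$, apply the SLLN simultaneously to the finitely many items $i \le N$ and to the aggregate tail $\sum_{j>N} n_j(t)/t \to \sum_{j>N} p_j$, and observe that swapping head items for tail items cannot raise the top-$m$ sum by more than $O(\e)$. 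Boundary ties $p_m = p_{m+1}$ cause no trouble, since all tied counts share the same limiting slope. Steps (1) and (2) are bookkeeping, and the SLLN is available throughout because the stream is i.i.d.; controlling this maximal $m$-subset sum over an infinite support is the one technical point.
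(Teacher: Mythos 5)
Your proof is correct, but it takes a genuinely different route from the paper's. The paper argues directly about miss events: every row whose item is not currently in the sketch increments the then-smallest bin; since the sketch holds at most $m$ labels, the conditional probability of such a miss is always at least $\alpha = \sum_{j>m} p_j$; a law-of-large-numbers step then gives at least $\alpha' t$ such increments eventually, and a pigeonhole over the $m$ bins converts that count into $\hat{N}_{min}(t) > \alpha' t/m$. You instead combine conservation of total count, $\sum_{i} \hat{N}_i(t) = t$, with the classical Space Saving overcount bound $\hat{N}_i(t) \le n_i(t) + \hat{N}_{min}(t)$ to get the pathwise inequality $\hat{N}_{min}(t) \ge (t - S_m(t))/m$, and then apply the SLLN only to the true counts. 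Each approach has something to recommend it. The paper's is shorter, but its appeal to the strong law is applied to indicators of misses, which are not i.i.d. (whether the next row is a miss depends on the evolving sketch state); making that step airtight requires a conditional or martingale version of the LLN, which the paper leaves implicit. Your route sidesteps this entirely: the only stochastic convergence you need is for the genuinely i.i.d.-driven counts $n_i(t)$, and you correctly identify and handle the one real technical point, namely the a.s. convergence of the maximal $m$-subset sum $S_m(t)/t \to 1-\alpha$ over an infinite support, via truncation of the tail. You also correctly verify that the overcount inequality survives the randomized relabeling (the key case being a miss that increments the minimum bin without changing its label, where the pre-increment gap is $0$ and $n_i \ge 1$). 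The price is that your argument is longer and leans on a structural inequality the paper never states; the benefit is a fully rigorous proof plus the stronger deterministic intermediate bound $\hat{N}_{min}(t) \ge (t - S_m(t))/m$, which holds on every sample path regardless of the stream's distribution.
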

\begin{proof}
	Note that any item not in the sketch is added to the smallest bin. The probability of encountering an item not in the sketch is lower bounded by $\alpha$. Furthermore, by the strong law of large numbers, the actual number of items encountered that are not in the sketch must be $ > \alpha' t + m$ eventually.
	If there are $\alpha' t + m$ items added to the smallest bin, then with $m$ bins, $\hat{N}_{min}(t) > \alpha' t/m$. 
\end{proof}

We now give the proof of theorem \ref{thm:frequent item}. The outline of the proof is as follows. We first show that item $1$ will always reappear in the sketch if it is replaced. When it reappears, its bin will accumulate increments faster than the average bin, and as long as it is not replaced during this processes, it will escape and never return to being the smallest bin. Since the number of items that can be added before the label on the minimum bin is changed is linear in the size of the minimum bin, there is enough time for item $1$ to "escape" from the minimum bin with some constant probability. Even if it fails to escape on a given try, it will have infinitely many tries, so eventually it will escape.
\begin{proof}
	Trivially, $\hat{N}_{min}(t) \leq t / m$ since there are $m$ bins, and the minimum is less than the average number of items in each bin.
	If item $1$ is not in the sketch, then the smallest bin will take on $1$ as its label with probability $p_1 / (1+\hat{N}_{min}(t)) \geq m p_1 / (m+t)$. 
	Since  conditional on item 1 not being in the sketch, these are independent events, the second Borel-Cantelli lemma gives that  
	item $1$ is in the sketch infinitely often. 
	Whenever item $1$ is in the sketch, $\hat{N}_1(t) - t / m$ is a submartingale with bounded increments. 
	Furthermore, it can be lower bounded by an asymmetric random walk $\tilde{N}_1(t) - t/m$ where the expected increment is $\geq p_1 - 1/m$. Let $\epsilon = p_1 - 1/m$. Let $t_0$ be the time item 1 flips the label of the smallest bin. 
	Lemma \ref{lem:lower bound} gives that the difference $t_0/m - \hat{N}_1(t_0) < t_0(1-\alpha') / m$ for any $\alpha' < \sum_{j > m} p_j$
	If item 1 is not displaced, then after $d = 2 t_0(1-\alpha') / m\epsilon$ additional rows,
	Azuma's inequality gives after rearrangement, $P(\hat{N}_i(t_0 + d) - (t_0 + d) / m < 0) \leq P(\hat{N}_i(t_0 + d) - \hat{N}_i(t_0) - d/m - d \epsilon < -d \epsilon / 2)  < exp(-d \epsilon^2 /8) < exp(-\epsilon (1-\alpha')/4m)$. 
	The probability that item 1 is instead displaced during this time is $< d / (d+\alpha' t_0)$ which can be simplified to some positive constant that does not depend on $t_0$.
	In other words, there is some constant probability $\gamma$ such that item 1 will go from being in the smallest bin to a value greater than the mean. 
	From there, there is a constant probability that the bounded sub-martingle $\hat{N}_i(t_0 + d + \Delta) - (t_0+d+\Delta)/m$ never crosses back to zero or below. 
	Since item 1 appears infinitely often, it must either become sticky or there are infinitely many 0 upcrossing for $\hat{N}_1(t) - t/m$. In the latter case, there is a constant probability $\rho > 0$ that lower bounds the probability the item becomes sticky. Thus a geometric random variable lower bounds the number of tries before item $i$ "sticks," and it must eventually be sticky.
\end{proof}

\subsection{Approximate PPS Sample}
\label{sec:pps tail}
We prove that for i.i.d. streams, Unbiased Space Saving approximates a PPS sample and does so without the expensive pre-aggregation step. This is born out by simulations as, surprisingly, it often empirically outperforms priority sampling on computationally expensive, pre-aggregated data. Since frequent items are included with probability 1, we consider only the remaining bins and the items in the tail.

\begin{lemma}
	\label{lem:bin vs mean}
	Let $B_i$ denote the count in the $i^{th}$ bin.
	If $p_1 < 1/m$ then $B_i(t) - t/m < (\log t)^2 + 1$ eventually.
\end{lemma}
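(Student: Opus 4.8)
The plan is to track the gap process $g_i(t) := B_i(t) - t/m$ between each bin's count and the mean count, and to show that a single bin exceeds the mean by more than $(\log t)^2 + 1$ only finitely often. First I would record two elementary facts. Since every row increments exactly one bin, the total count is $t$ and $\sum_i g_i(t) = 0$; in particular the smallest gap is always $\le 0$, so whenever $g_i(t) > 1$ bin $i$ is not a minimum bin. Second, I would compute the one-step drift of $g_i$ when bin $i$ is not a minimum bin: such a bin is incremented only when its current label $\ell_i$ arrives, so $\E[g_i(t+1) - g_i(t) \mid \mathcal{F}_t] = p_{\ell_i} - 1/m \le p_1 - 1/m =: -\delta$, where $\delta > 0$ uses the hypothesis $p_1 < 1/m$. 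The increments of $g_i$ lie in $\{(m-1)/m,\, -1/m\}$ and are bounded by $1$, so on the region $\{g_i > 1\}$ the gap behaves like a bounded-increment supermartingale with uniform negative drift $-\delta$.

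For the quantitative bound I would run a Borel--Cantelli argument, so it suffices to show $\sum_t P(g_i(t) \ge (\log t)^2 + 1) < \infty$ for each of the finitely many bins. Fix $t$ and let $s \le t$ be the last time at which $g_i(s) \le 1$; such a time exists since $g_i(0) = 0$. By construction $g_i$ stays above $1$ on $(s,t]$, so by the first paragraph bin $i$ avoids the minimum throughout the interior of this window and the drift bound $-\delta$ holds at every interior step. Writing $L = t-s$ and compensating $g_i$ by its conditional drifts yields a martingale with increments bounded by $2$; since the compensator is at most $1 - \delta(L-1)$ on this event, reaching $g_i(t) \ge (\log t)^2 + 1$ from $g_i(s) \le 1$ forces the martingale to deviate by at least $a_L := (\log t)^2 + \delta(L-1) - 1$, and Azuma's inequality bounds the probability of the length-$L$ window by $\exp(-a_L^2/(8L))$.

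Because the start time $s$ is random, I would union bound over the window lengths $L \in \{1,\dots,t\}$ and over the $m$ bins. The key point is that the Azuma exponent is bounded below uniformly in $L$: for $L \le (\log t)^2/\delta$ the numerator term $(\log t)^2$ dominates $a_L$, while for $L > (\log t)^2/\delta$ the drift term $\delta L$ takes over, and in both regimes $a_L^2/(8L) \ge c\,\delta(\log t)^2$ for a universal constant $c > 0$ and all large $t$. Hence $P(g_i(t) \ge (\log t)^2 + 1) \le m\,t\,\exp(-c\,\delta(\log t)^2)$, and since $t\,\exp(-c\,\delta(\log t)^2) = \exp(\log t - c\,\delta(\log t)^2)$ decays faster than any polynomial in $1/t$, the sum over $t$ converges and Borel--Cantelli gives the claim. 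I expect the main obstacle to be exactly this bookkeeping around the moving label: the relevant supermartingale lives only on the random excursion above level $1$, so one must justify the drift bound only where it holds and make the tail estimate uniform over the random window length $L$ so that the extra union-bound factor of $t$ is harmless.
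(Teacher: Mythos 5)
Your proposal is correct and follows essentially the same route as the paper's proof: negative drift of $B_i(t)-t/m$ above the mean (since such a bin is not the minimum and is hit with probability at most $p_1<1/m$), decomposition by the last upcrossing time, Azuma's inequality over the excursion window, and Borel--Cantelli with $c(t)=(\log t)^2$. If anything, your explicit union bound over the window length $L$ (absorbing the factor of $t$ into the $\exp(-c\,\delta(\log t)^2)$ tail) is more careful than the paper's ``maximize over $s$'' step.
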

\begin{proof}
	If $B_i(t) > t/m$ then $B_i(t)$ is not the smallest bin. 
	In this case, the expected difference after 1 time step is bounded above by $\delta \defn p_1 - 1/m < 0$. Consider a random walk $W(t)$ with an increment of $1 - 1/m$ with probability $p_1$ and $-1/m$ otherwise. By Azuma's inequality, 
	if it is started at time $t-s$ at value $1$ then the probability it exceeds $(\log t)^2$ is bounded by 
	$P(W(t) - s / m - 1 > c(t) + \delta s) < exp(-(c(t) + \delta s)^2 / 2s)$. Since for $B_i(t) - t/m$ to be $> c(t)$, it must upcross 0 at some time $t-s$, maximizing over $s$ gives an upper bound on the probability $B_i(t) -t/m > c(t)$.
	It is easy to derive that $s = c(t) / \delta$ is the maximizer
	and the probability is bounded by $exp(-\delta c(t))$. When
	$c(t) = (\log t)^2$, $\sum_{t=1}^\infty exp(-\delta c(t)) < \infty$, and the conclusion holds by the Borel-Cantelli lemma.
\end{proof}

\begin{lemma}
	If $p_1 < 1/m$	then $0 \leq t/m - \hat{N}_{min} \leq m (\log t)^2 + m$ and $0 \leq \hat{N}_{max} - t/m \leq (\log t)^2 + 1$ eventually
\end{lemma}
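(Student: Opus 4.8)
The plan is to derive this lemma almost entirely from the preceding Lemma \ref{lem:bin vs mean} together with one elementary conservation identity. First I would record the identity $\sum_{i=1}^m B_i(t) = t$: every processed row increments exactly one counter by exactly one (the matching bin if the item is present, otherwise the smallest bin, whose count rises by one whether or not its label is replaced by the randomization). Consequently the average bin count is exactly $t/m$, and the two lower bounds are immediate, since a minimum is at most the average and a maximum is at least the average, giving $\hat{N}_{min} \leq t/m \leq \hat{N}_{max}$, i.e. $0 \leq t/m - \hat{N}_{min}$ and $0 \leq \hat{N}_{max} - t/m$.

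For the upper bound on $\hat{N}_{max} - t/m$, I would invoke Lemma \ref{lem:bin vs mean} directly. That lemma gives, for each fixed bin $i$, that $B_i(t) - t/m < (\log t)^2 + 1$ eventually. Because there are only finitely many ($m$) bins, the intersection of these $m$ eventual events is again eventual, so $\hat{N}_{max}(t) - t/m = \max_i B_i(t) - t/m < (\log t)^2 + 1$ eventually, which is exactly the claimed second bound.

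The upper bound on $t/m - \hat{N}_{min}$ is then squeezed out of the conservation identity and the max bound just established, rather than by a fresh drift argument on the smallest bin. Since one bin attains the minimum and the remaining $m-1$ bins are each bounded by $\hat{N}_{max}$, the identity yields
\[
t = \sum_{i=1}^m B_i(t) \;\leq\; \hat{N}_{min}(t) + (m-1)\hat{N}_{max}(t).
\]
Rearranging and substituting the eventual bound $\hat{N}_{max}(t) < t/m + (\log t)^2 + 1$ gives $\hat{N}_{min}(t) \geq t - (m-1)\big(t/m + (\log t)^2 + 1\big) = t/m - (m-1)\big((\log t)^2 + 1\big)$, hence $t/m - \hat{N}_{min}(t) \leq (m-1)\big((\log t)^2 + 1\big) \leq m(\log t)^2 + m$ eventually, as required.

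I do not expect a genuine obstacle here, since the real probabilistic content was already spent in Lemma \ref{lem:bin vs mean}; the only care needed is the two bookkeeping points noted above. The first is that Lemma \ref{lem:bin vs mean} is stated per bin and must be combined across the finitely many bins to control $\hat{N}_{max}$. The second, and the one worth flagging, is that I deliberately bound the \emph{smallest} bin indirectly through the total and the largest bin: a direct drift argument on $\hat{N}_{min}$ is awkward because the smallest bin is precisely the one that absorbs all overflow increments, so its one-step drift is not negative and Azuma's inequality does not apply to it the way it does to bins sitting above the mean.
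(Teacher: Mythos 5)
Your proof is correct and follows essentially the same route as the paper's: the paper also obtains the $\hat{N}_{max}$ bound by intersecting Lemma \ref{lem:bin vs mean} over the finitely many bins, and gets the $\hat{N}_{min}$ bound from the inequality $t/m - \hat{N}_{min} < m(\hat{N}_{max} - t/m)$, which is exactly the conservation-identity squeeze you spell out. Your version merely makes explicit the identity $\sum_i B_i(t) = t$ and the elementary min/average/max ordering that the paper leaves implicit.
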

\begin{proof}
	Since there are finitely many bins, by the lemma \ref{lem:bin vs mean}, $0 \leq \hat{N}_{max} - t/m \leq (\log t)^2 + 1$ eventually.	The other inequality holds since $t/m - \hat{N}_{min} < m(\hat{N}_{max} - t/m)$
\end{proof}

\begin{theorem}
	\label{thm:pps}
	If $p_1 < 1/m$, then the items in the sketch converge in distribution to a PPS sample.
\end{theorem}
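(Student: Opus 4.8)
The plan is to read the limiting inclusion probabilities off directly from unbiasedness combined with the concentration of all bin counts around $t/m$ that the two preceding lemmas supply, and then to recognize those probabilities as exactly the PPS inclusion probabilities. No explicit description of the label dynamics is needed; unbiasedness does the work.

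First I would record the one exact identity available. By Theorem \ref{thm:unbiased} the estimate $\hat{N}_i(t)$ is unbiased for the true count $n_i(t)$ on any fixed stream, and since the stream is i.i.d.\ we have $\E n_i(t) = p_i t$, hence $\E \hat{N}_i(t) = p_i t$. The second ingredient is the structural fact that $\hat{N}_i(t) = \hat{N}_i(t) Z_i(t)$, because an item absent from the sketch is assigned estimate $0$; therefore $p_i t = \E[\hat{N}_i(t) Z_i(t)]$, where $\pi_i(t) = \E Z_i(t)$ is the inclusion probability.

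Next I would insert the concentration bound. The (unlabeled) lemma immediately preceding the theorem gives that, eventually almost surely, every occupied bin satisfies $t/m - \hat{N}_j(t) \le m(\log t)^2 + m$ and $\hat{N}_j(t) - t/m \le (\log t)^2 + 1$, so conditional on $Z_i(t) = 1$ I may write $\hat{N}_i(t) = t/m + R_i(t)$ with $|R_i(t)| = O(m(\log t)^2)$. Splitting the expectation gives $p_i t = (t/m)\,\pi_i(t) + \E[R_i(t) Z_i(t)]$, and solving yields $\pi_i(t) = m p_i - (m/t)\,\E[R_i(t) Z_i(t)] = m p_i + O((\log t)^2/t)$, so $\pi_i(t) \to m p_i$ for every fixed $i$. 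Because $p_i \le p_1 < 1/m$ these limits lie in $(0,1)$ and sum to $m\sum_i p_i = m$, the sample size, they are precisely the first-order inclusion probabilities of a probability-proportional-to-size sample whose sizes are the item frequencies, which is the claim. (I would make explicit that ``PPS sample'' is meant in the sense of first-order inclusion probabilities, the property that characterizes such a design and on which the Horvitz--Thompson sum estimator relies.)

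The main obstacle is converting the ``eventually almost surely'' bin bounds into a bound on the expectation $\E[R_i(t) Z_i(t)]$, since a priori a rare sample path could place a bin far from $t/m$ and inflate the average. I would control this with the deterministic envelope $\hat{N}_i(t) \le t$ (total count across bins is exactly $t$) together with the fact that the deviation estimates behind Lemma \ref{lem:bin vs mean} come from Azuma/Borel--Cantelli bounds whose failure probability decays like $\exp(-\delta(\log t)^2)$, faster than any power of $t$. Hence the exceptional event contributes at most $t\cdot P(\text{bad}) = o(1)$, while on the good event $|R_i(t)| = O(m(\log t)^2)$; together these give $\E[R_i(t) Z_i(t)] = O((\log t)^2)$, which is exactly what the final division by $t/m$ needs.
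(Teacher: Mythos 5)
Your proof is correct, but it takes a genuinely different route from the paper's. The paper's argument is structural: it observes that the label-replacement rule (change the label with probability $1/(\hat{N}_{min}+1)$) makes each bin's label a size-one reservoir sample of the rows that entered that bin, and then uses the near-equality of all bin sizes ($t/m + O((\log t)^2)$) to conclude that the chance a bin carries label $i$ is proportional to the number of $i$'s it absorbed, summing over bins to get $\pi_i \to m p_i$. You instead bypass the label dynamics entirely and extract the same limit from three facts: unbiasedness ($\E \hat{N}_i(t) = p_i t$, via Theorem \ref{thm:unbiased} plus the i.i.d.\ assumption), the support identity $\hat{N}_i(t) = \hat{N}_i(t) Z_i(t)$, and the bin-concentration lemma, giving $p_i t = (t/m)\pi_i(t) + O(m(\log t)^2)$ and hence $\pi_i(t) \to m p_i$. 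Your version buys two things: it is more robust (it would apply verbatim to any unbiased reduction operation whose bin counts concentrate around $t/m$, not just this particular label rule), and it explicitly patches a step the paper glosses over, namely converting the ``eventually almost surely'' bin bounds into a bound on $\E[R_i(t)Z_i(t)]$ using the superpolynomial decay of the Azuma failure probability together with the trivial envelope $\hat{N}_i(t) \le t$. The paper's version buys the reservoir-sampling insight, which is reused in the surrounding discussion to explain why tail labels track item frequencies. Both proofs establish only the first-order inclusion probabilities, which is weaker than literal convergence in distribution of the sampling design; you are right to flag that this is the sense in which the claim should be read, and the paper's own proof is no stronger on this point. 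One cosmetic correction: the sum $\sum_i m p_i = m$ is the number of bins, i.e.\ the (eventual) sample size, which is exactly the normalization a PPS design of size $m$ requires, so your identification is consistent.
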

\begin{proof}
	The label in each bin is obtained by reservoir sampling. Thus it is a uniform sample on the rows that go into that bin.
	Since all bins have almost exactly the same size $t/m + O( (\log t)^2)$, it follows that item $i$ is a label with probability $p_i + O((\log t)^2/t)$. 	
\end{proof}

The asymptotic analysis of Unbiased Space Saving splits items into two regimes. Frequent items are in the sketch with probability 1 and the associated counts are nearly exact. The threshold at which frequent and infrequent items are divided is given by corollary \ref{cor:joint frequent item} and is the same as the threshold in the merge operation shown in figure \ref{fig:merge}.
The counts for infrequent items in the tail are all  $\hat{N}_{min}(t)(1 + o(1))$.
The actual probability for the item in the bin is irrelevant since items not in the sketch will force the bin's rate to catch up to the rate for other bins in the tail.
Since an item changes the label of a bin with probability $1/B$ where $B$ is the size of the bin, the bin label is a reservoir sample of size 1 for the items added to the bin. Thus, the labels for bins in the tail are approximately proportional to their frequency.
Figure \ref{fig:inclusion probabilities} illustrates that the empirical inclusion probabilities match the theoretical ones for a PPS sample. The item counts are chosen to approximate a rounded $Weibull(5 \times 10^5, 0.15)$ distribution. This is a skewed distribution where the standard deviation is roughly $30$ times the mean.

\begin{figure}
	\begin{center}
		\includegraphics[width=4.5in]{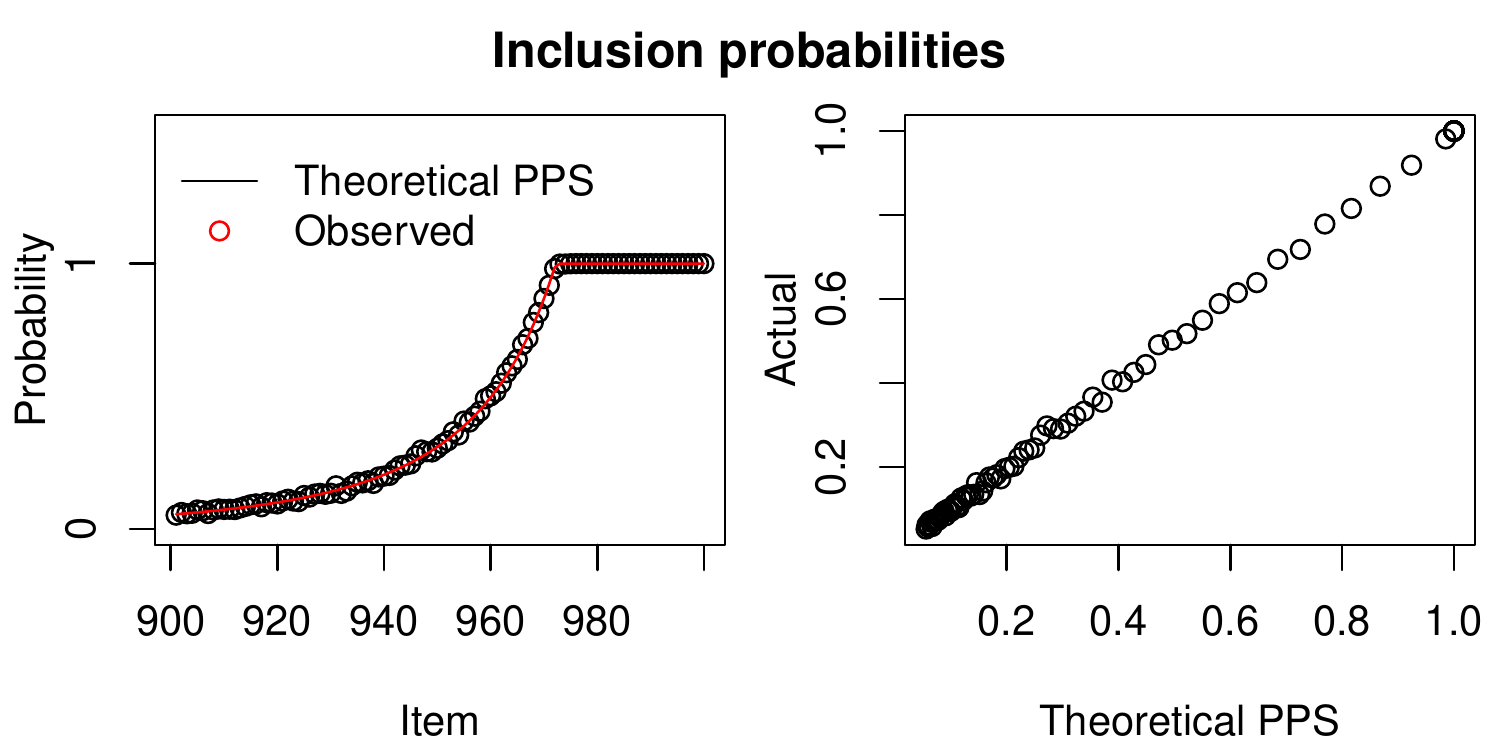}
	\end{center}
	\caption{The inclusion probability for each item empirically behaves like the inclusion probability for a probability proportional to size sample. This is also nearly equivalent to the inclusion probabilities for a priority sample. %
	}
	\label{fig:inclusion probabilities}
\end{figure}

We note, however, that the resulting PPS sample has limitations not present in PPS samples on pre-aggregated data. For pre-aggregated data, one has both the original value $x_i$ and the Horvitz-Thompson adjusted value $x_i / \pi_i$ where $\pi_i$ is the inclusion  probability. This allows the sample to compute non-linear statistics such as the population variance  which uses the second moment estimator $\sum_i x_i^2 Z_i / \pi_i$. With the PPS samples from disaggregated subset sum sketching, only the adjusted values are observed. 

\subsection{Pathological sequences}
\label{sec:pathological}
Deterministic Space Saving has remarkably low error when estimating the counts of frequent items \cite{cormode2008finding}. However, we will show that it fails badly when estimating subset sums when the data stream is not i.i.d.. Unbiased Space Saving performs well on both i.i.d. and on pathological sequences.

Pathological cases arise when an item's arrival rate changes over time rather than staying constant.  Consider a sketch with  2 bins. For a sequence of $c$ 1's, $c$ 2's, a single 3 and 4, the Deterministic Space Saving algorithm will always return 3 and 4, each with count $c+1$. By contrast,
Unbiased Space Saving will return 1 and 2 with probability $(1-1/c)^2 \approx 1$ when $c$ is large. Note that in this case, the count for each frequent item is slightly below the threshold that guarantees inclusion in the sketch, $c < n / 2$. This example illustrates the behavior for the deterministic algorithm. When an item is not in the "frequent item head" of the distribution then the bins that represent the tail pick the labels of the most recent items without regard to the frequency of older items.

We note that natural pathological sequences can easily occur. For instance, partially sorted data can naturally lead to such pathological sequences. This can occur from sketching the output of some join.
Data partitioned by some key where the partitions are processed in order is another case. We explore this case empirically in section \ref{sec:experimental}. 
Periodic bursts of an item followed by periods in which its frequency drops below the threshold of guaranteed inclusion are another example. 
The most obvious "pathological" sequence is the case where every row is unique. The Deterministic Space Saving sketch always consists of the last $m$ items rather than a random sample, and no meaningful subset sum estimate can be derived.

For Unbiased Space Saving, we show that even for non-i.i.d. sequences, it essentially never has an inclusion probability worse than simple random sampling which has inclusion probability $1 - (n_{tot} - n_i)_{m} / (n_{tot})_{m} \approx 1 - (1-n_i  / n_{tot})^m$ where $(x)_m$ denotes the $m^{th}$ falling factorial.

\begin{theorem}
	\label{thm:worst case}
	An item $i$ occurring $n_i$ times has worst case inclusion probability $\pi_i \geq 1 - (1-n_i  / n_{tot})^m$. An item with asymptotic frequency 
	$n_i = \alpha n / m + o(n/m)$ has an inclusion probability 
	$\pi_i \geq 1 - e^{-\alpha} + o(1)$ as $n,m \to \infty$.
\end{theorem}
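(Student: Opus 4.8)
The plan is to exploit the reservoir-sampling description of the labels that already underlies the proof of Theorem \ref{thm:pps}. Every row of the stream is routed to exactly one bin---the bin whose label it matches, or else the current minimum bin---and within each bin the surviving label is a reservoir sample of size one over the rows routed to it. Hence the final state is equivalent to partitioning the $n_{tot}$ rows into $m$ groups $G_1,\dots,G_m$ (the rows routed to each bin, a random partition whose block sizes $c_b = |G_b|$ are the final bin counts and satisfy $\sum_b c_b = n_{tot}$) and then drawing one uniform label from each group. Writing $a_b$ for the number of occurrences of item $i$ falling in $G_b$, so that $\sum_b a_b = n_i$, the conditional probability that none of the $m$ labels equals $i$ is $\prod_{b=1}^m (1 - a_b/c_b)$. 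Thus $1-\pi_i = \E\!\left[\prod_{b=1}^m (1 - a_b/c_b)\right]$, where the expectation is over the random partition induced by the internal coin flips.

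The key inequality to establish is $\E\!\left[\prod_b (1-a_b/c_b)\right] \le (1-n_i/n_{tot})^m$. I would first check the perfectly balanced configuration $c_b = n_{tot}/m$, $a_b = n_i/m$, which makes every factor equal to $1 - n_i/n_{tot}$ and attains the claimed value exactly; this identifies the bound as the balanced case and explains why it coincides with the simple-random-sampling probability. The substance is to show this balanced configuration is the \emph{worst} case. For fixed block sizes the separable map $(a_1,\dots,a_m) \mapsto \sum_b \log(1 - a_b/c_b)$ is Schur-concave, so spreading item $i$'s mass evenly maximizes the product and concentrating it can only help inclusion. Because the minimum-bin routing keeps the $c_b$ tightly concentrated around $n_{tot}/m$ (the mechanism quantified by Lemma \ref{lem:bin vs mean}), the random partition stays close to balanced, and a majorization/convexity comparison is meant to upgrade this pointwise picture to the expectation.

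The main obstacle is precisely this last step: the partition is random, adversarially influenced through the arrival order, and labels can be evicted (a non-$i$ row routed to a bin currently labeled $i$ can displace it), so the naive per-realization inequality $\prod_b (1-a_b/c_b) \le (1-n_i/n_{tot})^m$ is actually false---a single overloaded bin with unbalanced $c_b$ can push one realization above the target. The real work is to show these fluctuations average out, i.e.\ that the expected product respects the balanced bound despite eviction and the correlation between $\{a_b\}$ and $\{c_b\}$. A clean way to control the expectation is to couple the reservoir decisions to i.i.d.\ uniform priorities $U_r$ attached to the rows, so that each bin's label is its maximum-priority row and the globally largest priority always wins its own bin; since the adversary fixes the sequence before the priorities are drawn, the identity of the top priorities is exchangeable across rows, which is what forces $\pi_i$ to dominate the simple-random-sampling value. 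Finally, the asymptotic statement is immediate once the first part is in hand: substituting $n_i = \alpha n/m + o(n/m)$ gives $n_i/n_{tot} = \alpha/m + o(1/m)$, whence $(1-n_i/n_{tot})^m = (1 - \alpha/m + o(1/m))^m \to e^{-\alpha}$, and therefore $\pi_i \ge 1 - e^{-\alpha} + o(1)$.
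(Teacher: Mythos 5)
Your reservoir-sampling decomposition is the right starting point and matches the paper's: the label of each bin is a size-one reservoir sample of the rows routed to it, item $i$ misses the sketch with probability $\prod_b (1 - a_b/c_b)$, and the balanced configuration recovers $(1-n_i/n_{tot})^m$. But there is a genuine gap exactly where you flag it, and the fix is not the averaging/coupling argument you sketch. The missing observation is that the denominator in each factor should not be the \emph{final} bin count $c_b$. A bin's label can only change while that bin is the minimum bin, so the relevant reservoir is the set of rows added to bin $b$ up to $T_b$, the last time it receives an addition as the minimum bin; the paper writes $L_b$ for the bin's count at that moment and $C_{i,b}$ for the occurrences of $i$ among those rows. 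Crucially, the minimum of $m$ bins whose counts total at most $n_{tot}$ satisfies $L_b \leq n_{tot}/m$ deterministically, and on the event that $i$ is absent from the final sketch every one of its $n_i$ occurrences lands in some such reservoir, so $\sum_b C_{i,b} = n_i$. Maximizing $\prod_b (1 - C_{i,b}/L_b)$ subject to $L_b \leq n_{tot}/m$ and $\sum_b C_{i,b} = n_i$ gives
\[
\prod_b \left(1 - \frac{C_{i,b}}{L_b}\right) \;\leq\; \prod_b \left(1 - \frac{m\,C_{i,b}}{n_{tot}}\right) \;\leq\; \left(1 - \frac{n_i}{n_{tot}}\right)^m
\]
by AM--GM, and this holds \emph{per realization}. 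No expectation over the random partition, no Schur-concavity in the block sizes, and no priority coupling are needed.

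Your route, by contrast, keeps the final counts $c_b$ (which can greatly exceed $n_{tot}/m$), correctly observes that the pointwise inequality then fails, and defers the entire burden to an unproven claim that the fluctuations "average out" via exchangeable priorities. That claim is not obviously true as stated: the partition $\{c_b\}$ and the allocation $\{a_b\}$ are determined by the adversarial arrival order together with the coin flips, they are correlated with each other, and label eviction means the priority picture you describe does not straightforwardly apply. So the proposal as written does not close the first (and essential) inequality of the theorem. Your asymptotic step, substituting $n_i/n_{tot} = \alpha/m + o(1/m)$ to get $1 - e^{-\alpha} + o(1)$, is fine and is the same as the paper's.
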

\begin{proof}
	Whether an item is in the sketch depends only on the sequence of additions to the minimum sized bin. Let $T_{b}$ be last time an item is added to bin $b$ while it is the minimum bin. Let $C_{i,b}$ be the number of times item $i$ is added to bin $b$ by time $T_{b}$
	and $L_b$ be the count of bin $b$ at time $T_b$.
	Item $i$ is not the label of bin $b$ with probability $1 - C_{i,b} / L_b$,
	and it is not in the sketch with probability $\prod_b (1-C_{i,b} / L_b)$.
	Note that for item $i$ to not be in the sketch, the last occurrence of $i$ must have been added to the minimum sized bin.
	Thus, maximizing this probability under the constraints that $\sum_b L_b \leq n$ and $\sum_b C_{i,b} = n_i$ gives an upper bound on $1-\pi_i$ and yields the stated result.
\end{proof}

We note that the bound is often quite loose. It assumes a pathological sequence where the minimum sized bin is as large as possible, namely $L_b = n_{tot} / m$. If $L_b \leq \gamma n_{tot} / m$, the asymptotic bound would be $\pi_i \geq 1 - e^{-\alpha / \gamma} + o(1)$.

At the same time, we note that the bound is tight in the sense that one can construct a pathological sequence that achieves the upper bound.
Consider the sequence consisting of $n_{tot} - n_i$ distinct items followed by item $i$ for $n_i$ times with $n_i$ and $n_{tot}$ both being multiples of $m$. It is easy to see that the only way that item $i$ is not in the sketch is for it no bin to ever take on label $i$ and for the bins to all be equal in size to the minimum sized bin.  The probability of this event is equal to the given upper bound.

Although Deterministic Space Saving is poor on pathological sequences, we note that if data arrives in uniformly random order or if the data stream consists of i.i.d. data, one expects the Deterministic Space Saving algorithm to share similar unbiasedness properties as the randomized version as in both cases the label for a bin can be treated roughly as a uniform random choice out of the items in that bin.

\subsection{Variance}
In addition to the count estimate, one may also want an estimate of the variance. 
In the case of i.i.d. streams, this is simple since it forms an approximate PPS sample. Since the inclusion of items is negatively correlated,
a fixed size PPS sample of size $m$ has variance upper bounded by
\begin{align}
\label{eqn:PPS variance}
\var_{PPS}(\hat{N}_i) \leq \alpha_i n_i (1- \pi_i ).
\end{align}
When the marginal sampling probabilities $\pi_i = \min\{ 1, \alpha n_i\} $ are small, this upper bound is nearly exact.
For the non-i.i.d. case, we provide a coarse upper bound.
Since $\hat{N}_i(t) - n_i(t)$ is a martingale as shown in theorem \ref{thm:unbiased reduction},
the quadratic variation process taking the squares of the increments
$\sum_t (\hat{N}_i(t+1) - \hat{N}_i(t) - n_i(t+1) + n_i(t))^2$ yields an unbiased estimate of the variance.
There are only two cases where the martingale increment is non-zero: the new item is $i$ and $i$ is not in the sketch or
the new item is not $i$ and item $i$ is in the smallest bin.
In each case the expected squared increment is $\hat{N}_{min}(t) -1$ since the updated value is $1 + \hat{N}_{min}(t) \tilde{Z}_t$ where $\tilde{Z}_t \sim Bernoulli(N_{min}(t)^{-1})$.  Let $\tau_i$ be the time when item $i$ becomes "sticky." That is the time at which a bin acquires label $i$ and never changes afterwards. If item $i$ does not become sticky, then $\tau_i = n$.
Define $\kappa_i = n_i(\tau_i)$. It is the number of times item $i$ is added up to when it becomes sticky. This leads to the following upper bound on the variance
\begin{align}
\var(\hat{N}_i) &\leq 
\sum_{j = 0}^{\kappa_i} \E \left[\left(\hat{N}_{min} - \left\lfloor \frac{j}{m} \right\rfloor\right)_+ - 1\right]\\
&\leq \E (\hat{N}_{min} \kappa_i). \label{eqn:variance bound}
\end{align}
We note that the same variance argument holds when computing a further aggregation to estimate $n_S = \sum_{i \in S} n_i$ for a set of items $S$. In this case $\kappa_S$ is the total number of times items in $S$ are added to the sketch excluding the deterministic additions to the final set of "sticky" bins. 

To obtain a variance estimate for a count, we plug in an estimate for $\hat{\kappa}_i$
into equation \ref{eqn:variance bound}. We use the following estimate 
\begin{align}
\hat{\kappa}_S &= \hat{N}_{min} C_S \\
\widehat{\var}(\hat{N}_i) &= \hat{N}_{min}^2 C_S \label{eqn:var estimate}
\end{align}
where $C_S$ is the greater of $1$ and the number of times an item in $S$ appears in the sketch.

The estimate $\hat{\kappa}_S$ is an upward biased estimate for $\kappa_S$.
For items with count $\leq \hat{N}_{min}$, one has no information about their relative frequency compared to other infrequent items. Thus, we choose the worst case as our estimate $\hat{\kappa}_S = \hat{N}_{min}$.
For items with count $> \hat{N}_{min}$, we also take a worst case approach for estimating $\kappa$.
Consider a bin with size $\leq V - 1$. The probability that an additional item will cause a change in the label is $1/V$. Since $\hat{N}_{min}$ is the largest possible "non-sticky" bin, it follows $\kappa_i - 1< Y$ where $Y \sim Geometric(1/\hat{N}_{min})$. Taking the expectation
given $\hat{N}_{min}$ gives the upward biased estimate $\hat{\kappa}_i = \hat{N}_{min} + 1$. In this case, we drop the $1$ for simplicity and because it is an overestimate.

We compare this variance estimate with the variance of a Poisson PPS sample and show that they are similar for infrequent items but adds an additional term for each frequent item in the worst case for Unbiased Space-Saving.
Note that in the i.i.d. scenario for Unbiased Space-Saving, $\E C_i = \pi_i \to n_i / \alpha$ and  $\hat{N}_{min}$ converges to $\alpha$. Plugging these into equation \ref{eqn:var estimate} gives a variance estimate of $\alpha n_i$ which differs only by a factor of $1-\pi_i$ from the variance of a Poisson PPS sample given in equation \ref{eqn:PPS variance}. For infrequent items, $\pi_i$ is typically small. For frequent items, a Poisson PPS sample has inclusion probability $1$ and zero variance. In this case, the worst case behavior for Unbiased Space Saving contributes the same variance as an infrequent item.

The similar behavior to PPS samples is also borne out by experimental results. Figure \ref{fig:pathological sd} shows that the variance estimate is often very accurate and close to the variance of a true PPS sample.

\subsection{Confidence Intervals}
As the inclusion of a specific item is a binary outcome, confidence intervals for individual counts are meaningless. However,
the variance estimate allows one to compute Normal confidence intervals when computing sufficiently large subset sums. Thus, a system employing the sketch can provide estimates for the error along with the count estimate itself. These estimates are valid even when the input stream is a worst case non-i.i.d. stream. Experiments in section \ref{sec:experimental} shows that these Normal confidence intervals have close to or better than advertised coverage whenever the central limit theorem applies, even for pathological streams. 

\subsection{Robustness}
For the same reasons it has much better behavior under pathological sequences, Unbiased Space Saving is also more robust to adversarial sequences than Deterministic Space Saving.
Theorem \ref{thm:adversarial} shows that by inserting an additional $n_{tot}$ items, one can force all estimated counts to 0, including estimates for frequent items, as long as they are not too frequent. This complete loss of useful information is a strong contrast to the theoretical and empirical results for Unbiased Space Saving which suggest that polluting a dataset with $n_{tot}$ noise items will simply halve the sample size, since it will still return a sample that approximates a PPS sample.

\begin{theorem}
	\label{thm:adversarial}
	Let $\mathbf{n}$ be a vector of $v$ counts with $n_{tot} = \sum_{i=1}^v n_i$ and $n_i < 2 n_{tot} / m$ for all $i \leq v$. There is a sequence of $2 n_{tot}$ rows such that item $i$ appears exactly $n_i$ times, but the Deterministic Space Saving sketch returns an estimate of $0$ for all items $i \leq v$. 
\end{theorem}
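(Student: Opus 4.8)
The plan is to build the sequence in two phases and to track the $m$ bin counts, which always sum to the number of rows processed. Write $\theta \defn 2 n_{tot}/m$. In \emph{phase one} I feed the $n_{tot}$ real rows in a balanced (round-robin) order, cycling through the items that still have occurrences remaining; in \emph{phase two} I feed $n_{tot}$ fresh, pairwise-distinct noise items. The resulting sequence has $2 n_{tot}$ rows and item $i$ appears exactly $n_i$ times, as required. Since each real row and each noise row increments exactly one counter, the bin counts sum to $n_{tot}$ at the end of phase one; every phase-two row is a miss, so it increments the current smallest bin and overwrites its label with a distinct noise item.

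\textbf{Key arithmetic and the water-filling step.} Let $c_1,\dots,c_m$ denote the bin counts at the end of phase one. Phase two pours exactly $n_{tot}$ unit increments, each into the current minimum bin, and $\hat{N}_{min}$ is non-decreasing under this operation. The crucial identity is that, \emph{provided every $c_b<\theta$}, filling all bins up to the common level $\theta$ consumes
\[ \sum_{b} \left( \theta - c_b \right) = m\,\theta - \sum_b c_b = 2 n_{tot} - n_{tot} = n_{tot} \]
increments --- exactly the phase-two budget. A standard water-filling argument then shows that pouring $n_{tot}$ increments one at a time into the running minimum, starting from counts all strictly below $\theta$, terminates with every bin at level $\theta$ and, more importantly, with every bin having received at least one increment. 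Each such increment relabels its bin with a distinct noise item, so no bin can retain a real label: every item $i\le v$ that survived phase one is overwritten.

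\textbf{The bin-count bound and conclusion.} What remains is to show phase one can be arranged so that every labelled bin has count strictly below $\theta=2n_{tot}/m$; this is exactly where the hypothesis $n_i<2n_{tot}/m$ enters. When $v\le m$ this is immediate: feeding each item's occurrences contiguously sends its first occurrence to an empty bin, so that bin is never a takeover target and its count equals $n_i<\theta$ (the generalization of the paper's two-bin example). Combining this bound with the water-filling step finishes the argument, since every $i\le v$ is either evicted during phase one and so already has estimate $0$, or is wiped in phase two; in either case $\hat{N}_i=0$. The threshold is sharp, since a bin of count $\ge\theta$ cannot be filled to $\theta$ and would survive, which is why the condition $n_i<2n_{tot}/m$ cannot be relaxed.

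\textbf{Main obstacle.} The delicate point is the bin-count bound when $v>m$, where items must share bins and each takeover inherits the current $\hat{N}_{min}$, inflating a bin's count above the true $n_i$. The generic Space Saving guarantee only yields $c_b\le n_i+\hat{N}_{min}\le n_i+n_{tot}/m<\tfrac32\theta$, which is too weak by a constant factor. The balanced schedule is what saves it: under round-robin every surviving item's \emph{last} takeover occurs in the early rounds while $\hat{N}_{min}$ is still small, so its overcount is frozen near $0$ and $c_b$ stays close to $n_i$. Making this precise --- in particular, showing that an established high-count item is not evicted and re-acquired late, once $\hat{N}_{min}$ has grown --- is the crux of the proof.
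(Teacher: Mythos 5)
Your two-phase skeleton matches the paper's, and your phase-two water-filling argument is correct and in fact more explicit than the paper's one-line justification: since the $m$ phase-one counts sum to $n_{tot}$ and the $n_{tot}$ noise rows each go to the running minimum, every bin whose phase-one count lies strictly below the final common level $2n_{tot}/m$ must receive at least one noise increment and hence be relabeled by a distinct noise item. But the proof is not complete: as you yourself flag, the whole theorem reduces to showing that \emph{some} phase-one ordering keeps every bin count below $2n_{tot}/m$, and for $v>m$ you leave this open. Your round-robin schedule also makes this harder than necessary --- under round-robin an item can be evicted and re-acquired many times, each re-acquisition inheriting the current $\hat{N}_{min}$, and controlling when the \emph{last} re-acquisition happens is exactly the difficulty you describe without resolving.

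The paper's (terse) proof closes this gap with a different phase-one ordering: feed the real items \emph{contiguously, sorted in descending order of frequency}. Then each item is acquired exactly once and never re-acquired (no eviction can occur inside a contiguous block), so the bin last claimed by item $j$ ends phase one with count $\hat{N}_{min}(t_j) + n_j$, where $t_j$ is the arrival time of item $j$'s block. For $j \le m$ the item lands in an empty bin and the final count is $n_j < 2n_{tot}/m$ by hypothesis. For $j > m$, both terms are controlled by the prefix sum $\Sigma_j := \sum_{i<j} n_i$: the minimum is at most the average, so $\hat{N}_{min}(t_j) \le \Sigma_j/m$, and the descending sort gives $n_j \le \Sigma_j/(j-1) \le \Sigma_j/m$, whence the bin count is at most $2\Sigma_j/m < 2n_{tot}/m$. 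This sorting step is the missing idea; with it, your water-filling argument finishes the proof. (Both your write-up and the paper's gloss over the integrality of $2n_{tot}/m$ --- the paper's ``$\pm 1$'' --- but that is cosmetic compared with the phase-one bound.)
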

\begin{proof}
	Sort the items from most frequent to least frequent. Add $n_{tot}$ additional distinct items.
	The resulting deterministic sketch will consist only of the additional distinct items and each bin will have count $2n_{tot}/m \pm 1$.
\end{proof}

\subsection{Running time and space complexity}
The update operation is identical to the Deterministic Space Saving update except that it changes the label of a bin less frequently. Thus, each update can be performed in $O(1)$ time \cite{metwally2005spacesaving} when the stream summary data structure is used. In this case the space usage is $O(m)$ where $m$ is the number of bins.

\section{Experiments}
\label{sec:experimental}
We perform experiments with both simulations and real ad prediction data.
For synthetic data, we consider three cases: randomly permuted sequences, realistic pathological sequences for Deterministic Space Saving, 
and "pathological" sequences for Unbiased Space Saving.
For each we draw the count for each item using a Weibull distribution that is discretized to integer values. That is $n_i \sim Round(Weibull(k, \alpha))$ for item $i$. 
The discretized Weibull distribution is a generalization of the geometric distribution that allows us to adjust the tail of the distribution to be more heavy tailed. We choose it over the Zipfian or other truly heavy tailed distributions as few real data distributions have infinite variance. Furthermore, we expect our methods to perform better under heavy tailed distributions with greater data skew as shown in figure \ref{fig:iid}. 
For more easily reproducible behavior we applied the inverse cdf method $n_i = F^{-1}(U_i)$ where the $U_i$ are on a regular grid of $1000$ values rather than independent $Uniform(0,1)$ random variables. Randomly permuting the order in which individual rows arrive yields an exchangeable sequence which we note is identical to an i.i.d. sequence in the limit by de Finetti's theorem.
In each case, 
we draw at least $10,000$ samples to estimate the error.

For real data, we use a Criteo ad click prediction dataset
\footnote{\url{http://labs.criteo.com/2014/02/kaggle-display-advertising-challenge-dataset/}}. 
This dataset provides a sample of 45 million ad impressions. Each sample includes the outcome of whether or not the ad was clicked as well as multiple integer valued and categorical features. We do not randomize the order in which data arrives in this case.
We pick a subset of 9 of these features. There are over 500 million possible tuples on these features and many more possible filtering conditions.

The Criteo dataset provides a natural application of the disaggregated subset sum problem. Historical clicks are a powerful feature in click prediction \cite{richardson2007predicting}, \cite{hillard2010improving}.
While the smallest unit of analysis is the $ad$ or the $(user, ad)$ pair,
the data is in a disaggregated form with one row per impression. Furthermore,
since there may not be enough data for a particular ad, the relevant click prediction feature may be the historical click through rate for the advertiser or some other higher level aggregation. 
Past work using sketches to estimate these historical counts \cite{shrivastava2016time} include the CountMin counting sketch as well as the Lossy Counting frequent item sketch. 

To simulate a variety of possible filtering conditions, we draw random subsets of 100 items to evaluate the randomly permuted case. As expected, subsets which mostly pick items in the tail of the distribution and have small counts also have estimates with higher relative root mean squared error. The relative root mean squared error (RRMSE) is defined as $\sqrt{MSE} / n_S$ where $n_S$ is the true subset sum. For unbiased estimators this is equivalent to 
$\sigma_S / n_S$ where $\sigma_S$ is the standard deviation of the estimator.
Note that an algorithm with $c$ times the root mean squared error of a baseline algorithm typically requires $c^2$  times the space as the variance, not the standard deviation, scales linearly with size.

We compare out method to uniform sampling of items using the bottom-k sketch, priority sampling, and Deterministic Space Saving. Although we do not directly compare with sample and hold methods, we note that figure 2 in \cite{cohen2007sketching} shows that sample and hold performs significantly worse than priority sampling.

\begin{figure}
	\includegraphics[width=4.4in]{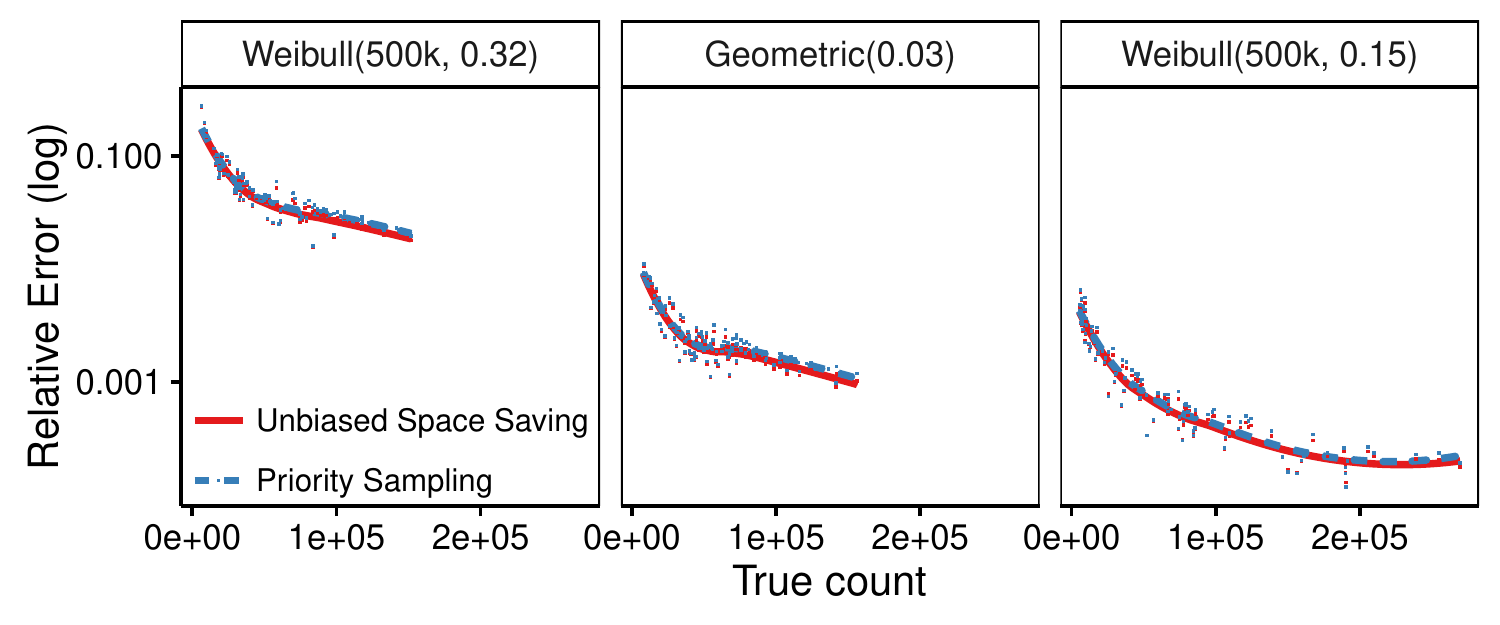} 
	\caption{The sketch accuracy improves when the skew is higher and when more and larger bins are contained in the subset. The number of bins is 200. }
	\label{fig:iid}
\end{figure}

\begin{figure}
	\includegraphics[width=4.4in]{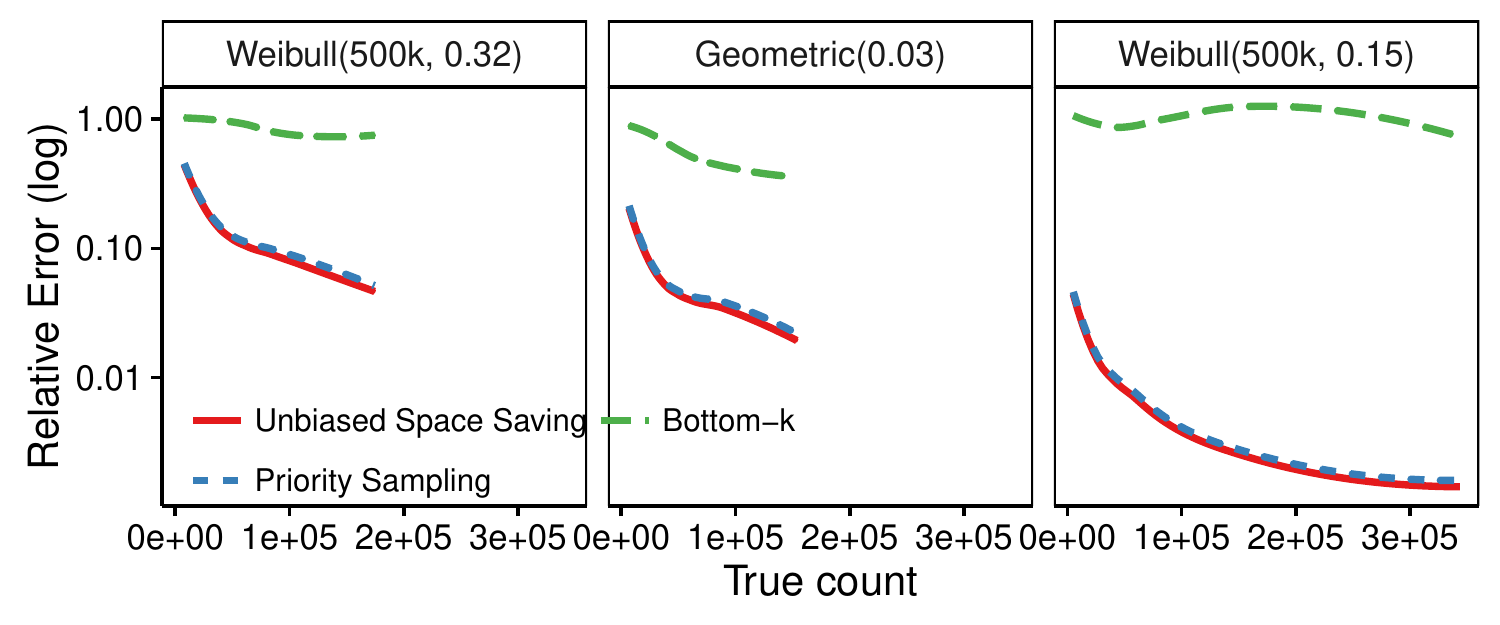} 
	\caption{Unbiased Space Saving performs orders of magnitude better than uniform sampling of items (Bottom-k) in the smoothed plot of relative error versus the true count. With 100 bins, the error is higher than with 200 bins given in figure \ref{fig:iid} but the curve is qualitatively similar.  }
	\label{fig:bad}
\end{figure}

Surprisingly, figure \ref{fig:Priority vs Space Saving}  shows that Unbiased Space Saving performs better than priority sampling even though priority sampling is applied on pre-aggregated data. We are unsure as to the exact reason for this. However, we note that, unlike Unbiased Space Saving, priority sampling does not ensure the total count is exactly correct. A priority sample of size $100$ when all items have the same count will have relative error of $\approx 10\%$ when estimating the total count. 

This added variability in the threshold and the relatively small sketch sizes for the simulations on i.i.d. streams may explain why Unbiased Space Saving performs even better than what could be considered close to a "gold standard" on pre-aggregated data.

\begin{figure}
	\begin{center}
	\includegraphics[width=4.4in]{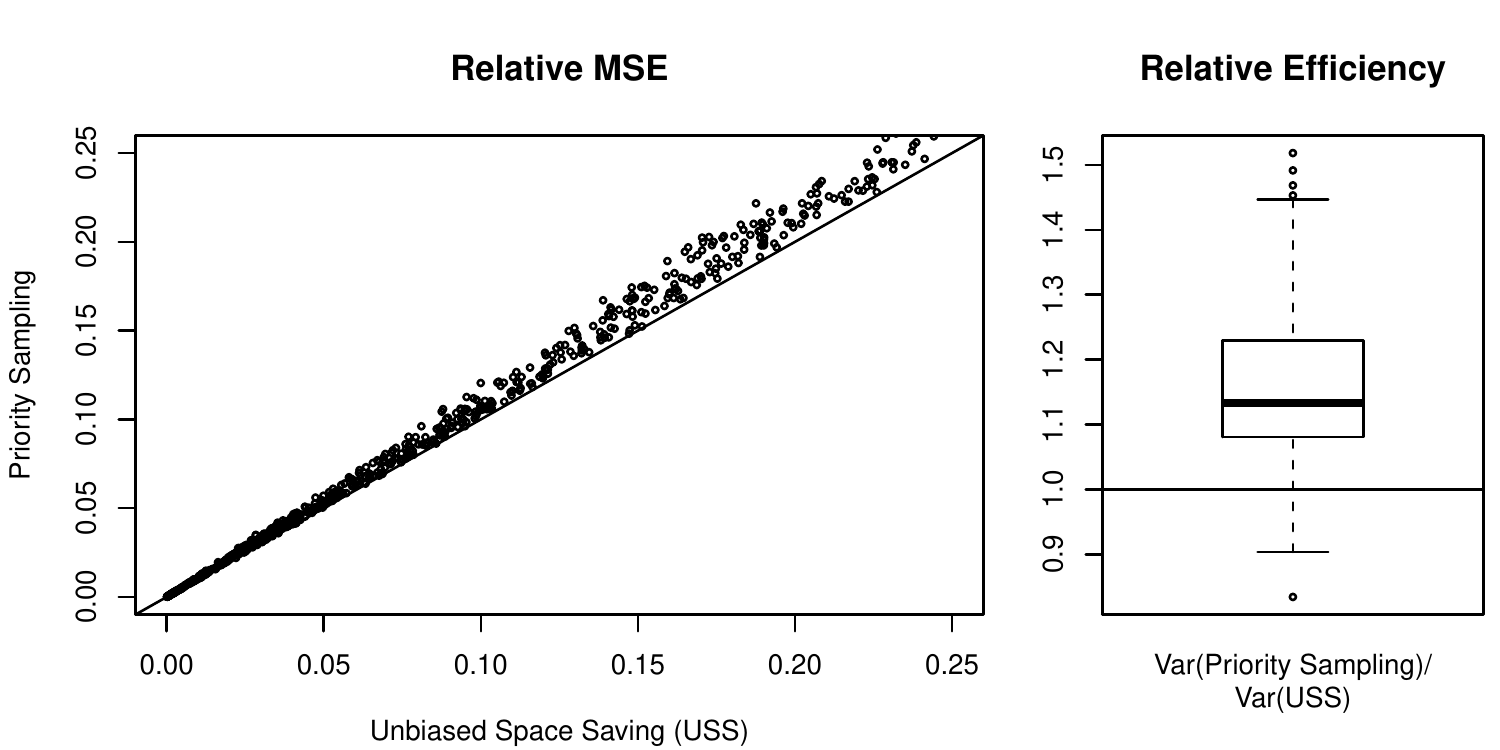}
	\end{center}
	\caption{Unbiased Space Saving performs slightly better than priority sampling on the synthetic data despite priority sampling using pre-aggregated data rather than the raw unaggregated data stream.}
	\label{fig:Priority vs Space Saving}
\end{figure}

\begin{figure}
	\includegraphics[width=4.4in]{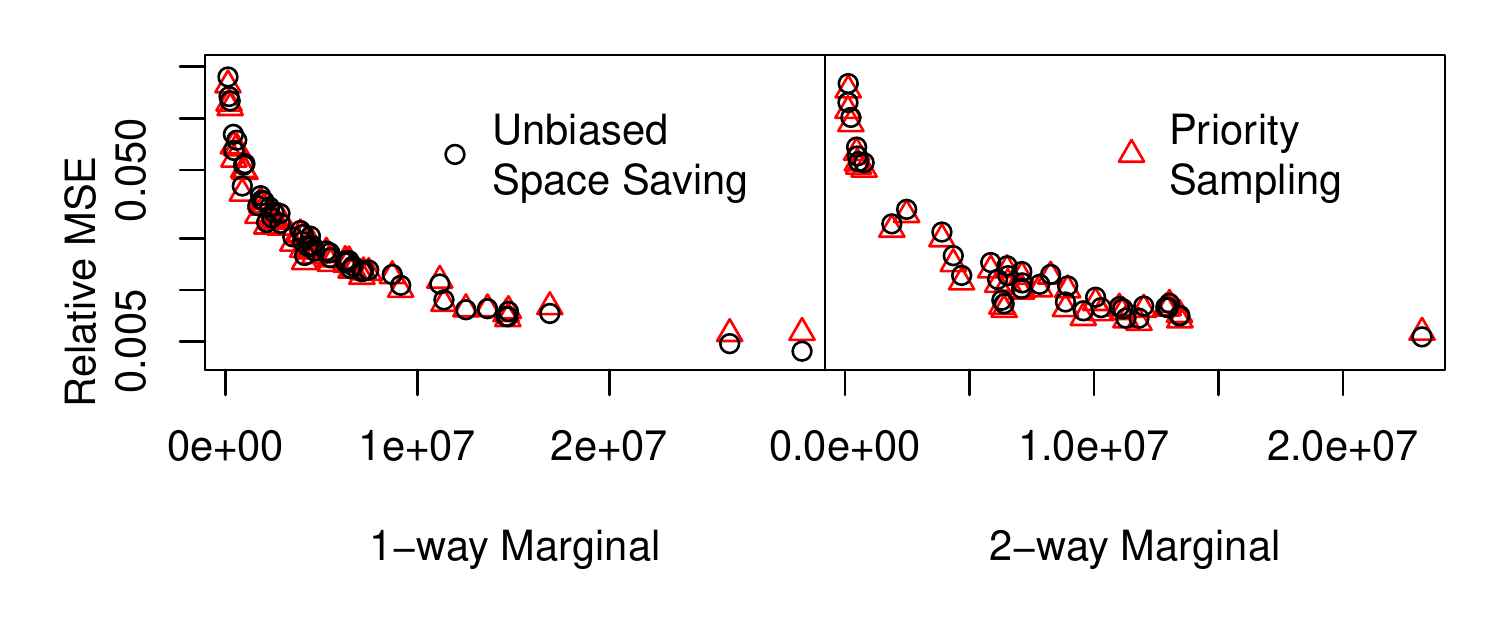} 
	\caption{The Unbiased Space Saving sketch is able to accurately compute 1 and 2 way marginals. The average relative mse for a marginal count that is between 100k and 200k is $< 5\%$ and for marginals containing more than half the data, the mean squared error drops to under $0.5\%$. It performs similarly to priority sampling.}
	\label{fig:criteo}
\end{figure}

\subsection{Pathological cases and variance}

For pathological sequences we find that Unbiased Space Saving performs well in all cases
while Deterministic Space Saving gives unacceptably large errors even for reasonable non-i.i.d. sequences.
First we consider a pathological sequence for Deterministic Space Saving. This sequence is generated by splitting the sequence into two halves. Each half is an independent i.i.d. stream from a discretized Weibull frequency distribution. This is a natural scenario as the data may be randomly partitioned into blocks, for example, by hashed user id, and each block is fed into the sketch for summarization.
As shown in figure \ref{fig:pathological}, Deterministic Space Saving completely ignores infrequent items in the first half of the stream, resulting in large bias and error. In this case, the sketches used are small with only $100$ bins, and the disparity would only increase with larger sketches and streams where the bias of Deterministic Space Saving remains the same but the variance decreases for Unbiased Space Saving. 

\begin{figure}
	\begin{center}	
		\begin{tabular}{cc}
			\includegraphics[width=2.1in]{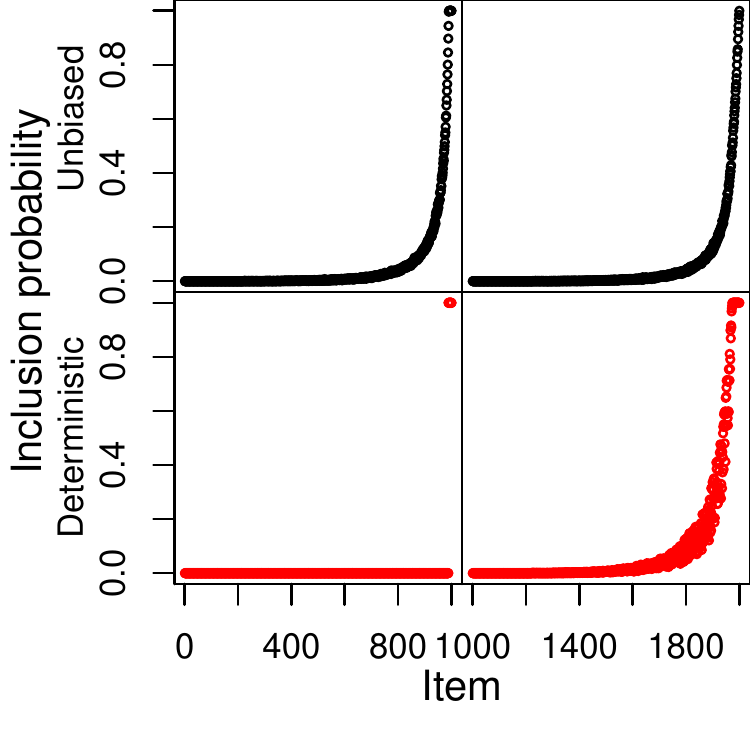} & 
			\includegraphics[width=2.1in]{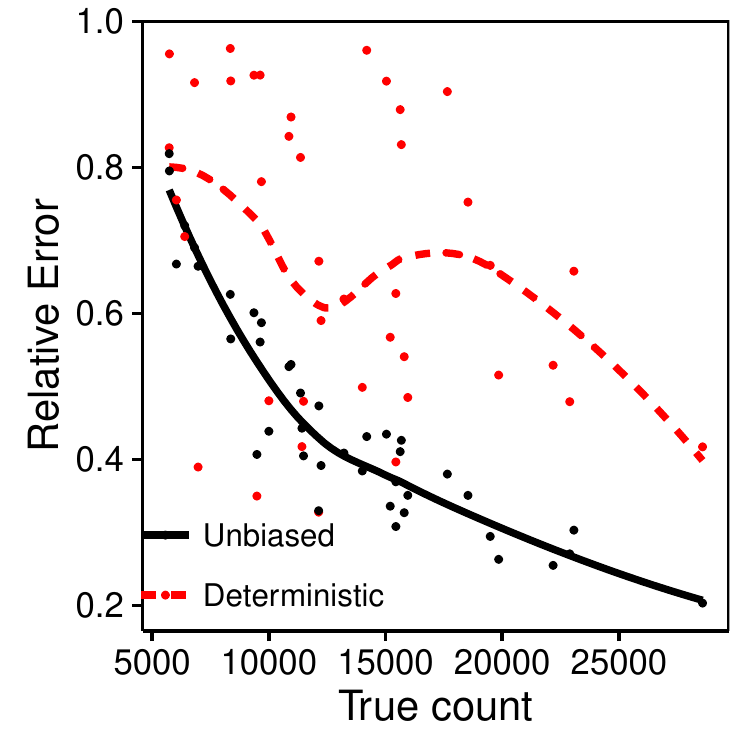} 
		\end{tabular}
	\end{center}
	\caption{Left: Items 1 to 1,000 only appear in the first half of the stream. The inclusion probabilities for a pathological sequence still behave like a PPS sample for Unbiased Space Saving, but only the frequent items in the first half are sampled under Deterministic Space Saving.  Right: As a result, Deterministic Space Saving is highly inaccurate when querying items in the first half of the stream.}
	\label{fig:pathological}
\end{figure}

The types of streams that induce worst case behavior for Deterministic and Unbiased Space Saving are different. 
For Unbiased Space Saving, we consider a sorted stream arranged in ascending order by frequency. Note that the reverse order where the largest items occur first gives an optimally favorable stream for Unbiased Space Saving. Every frequent item is deterministically included in the sketch, and the count is exact. The sequence consists of $10^5$ distinct items and $10^9$ rows where the item counts are from a discretized Weibull distribution. We use $10,000$ bins in these experiments.
To evaluate our method, we partition the sequence into 10 epochs containing an equal number of distinct items and estimate the counts of items from each epoch. 
We find in this case our variance estimate given in equation \ref{eqn:var estimate} yields an upward biased estimate of the variance as expected. Furthermore, it is accurate except for very small counts and the last items in a stream.
Figure \ref{fig:pathological CI} shows the true counts and the corresponding 95\% confidence intervals computed as $\hat{N}_S \pm 1.96 \widehat{\var}(\hat{N}_{S})$. In epochs 4 and 5, there are on average roughly 3 and 13 items in the sample, and the asymptotic properties from the central limit theorem needed for accurate normal confidence intervals have not or are not fully manifested. For epochs 1 and 2, the upward bias of the variance estimate gives 100\% coverage despite the central limit theorem not being applicable.
The coverage of a confidence interval is defined to the the probability the interval includes the true value. A 95\% confidence interval should have almost exactly 95\% coverage.
Lower coverage represents an underestimation of variability or risk. 
Less harmful is higher coverage, which represents an overly conservative estimation of variability.

We note that the behavior of Deterministic Space Saving is easy to derive in this case. The first 9 epochs have estimated count of 0 and the last epoch has estimated count $n_{tot} = 10^{9}$. Figure \ref{fig:RRMSE} shows that except for small counts, Unbiased Space Saving performs an order of magnitude better than Deterministic Space Saving. 

\begin{figure}
	\begin{center}	
		\begin{tabular}{cc}
			\includegraphics[width=2.1in]{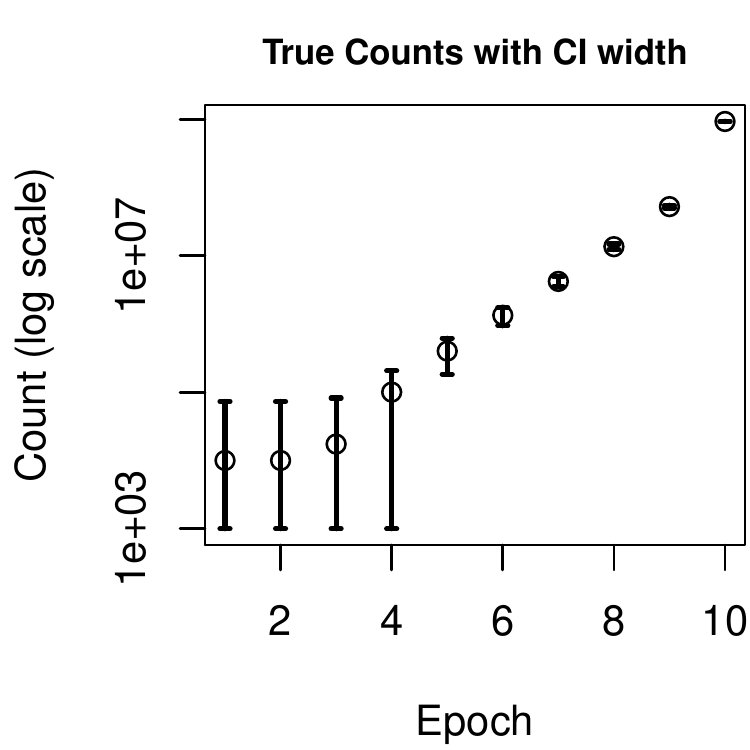} & 
			\includegraphics[width=2.1in]{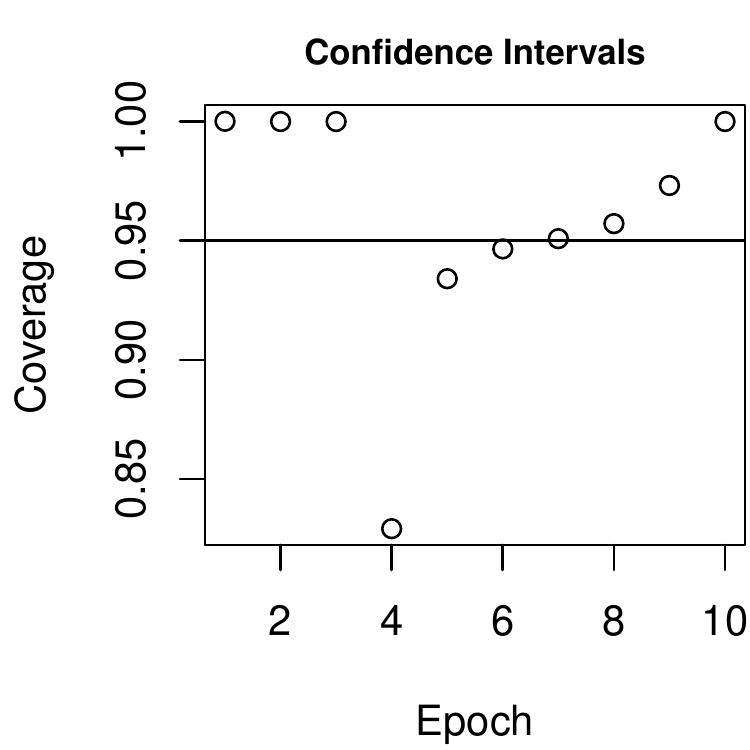} 
		\end{tabular}
	\end{center}
	\caption{Left: For a pathological sorted sequence, the true counts are given with bars indicating the average 95\% confidence interval width. For epochs 1 to 4, the intervals are truncated below as they extend past 0. Right: Normal confidence intervals generally deliver higher than advertised coverage. The exceptions lie in a regime where the variance estimate is accurate as shown in figure \ref{fig:pathological sd}, but the sample contains too few items from the epoch to apply the central limit theorem. }
	\label{fig:pathological sd}
\end{figure}

\begin{figure}
	\begin{center}	
		\begin{tabular}{cc}
			\includegraphics[width=2.1in]{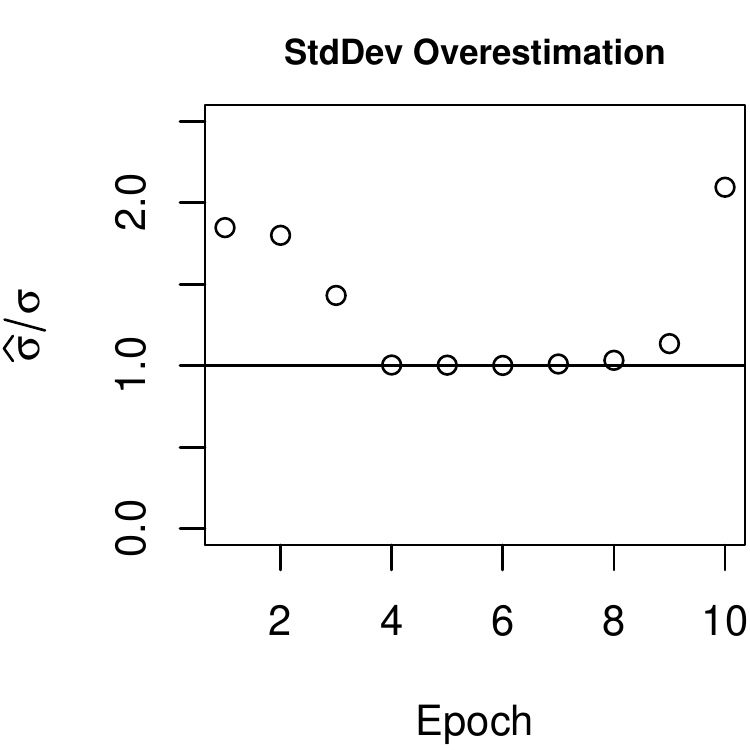} & 
			\includegraphics[width=2.1in]{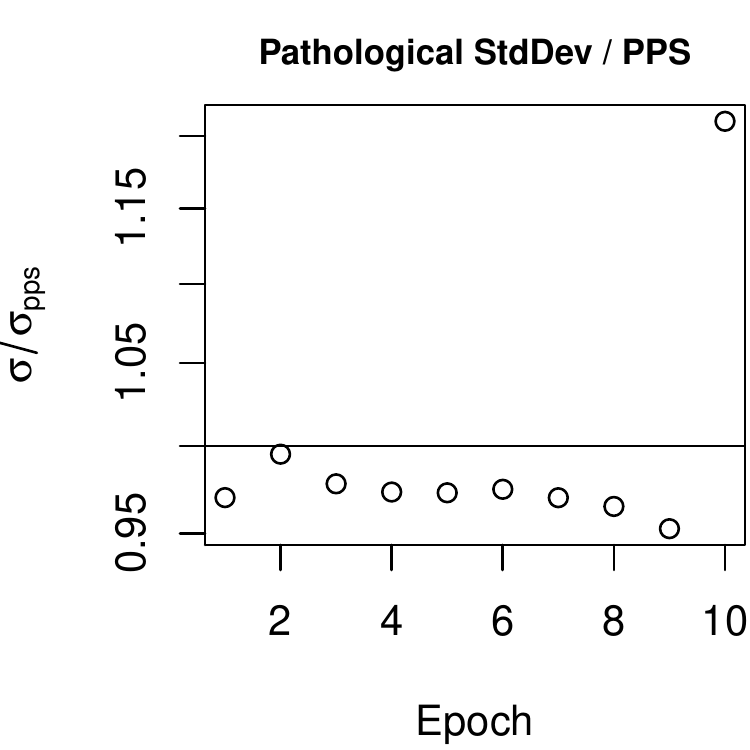} 
		\end{tabular}
	\end{center}
	\caption{Left: For pathological streams, the estimated standard deviation from equation \ref{eqn:var estimate} is shown to be accurate and match the true standard deviation for counts that are not too large or small. Right: Even for pathological streams, the variance closely matches the variance from a PPS sample. }
	\label{fig:pathological CI}
\end{figure}

\begin{figure}
		\begin{center}
	\includegraphics[width=2.5in]{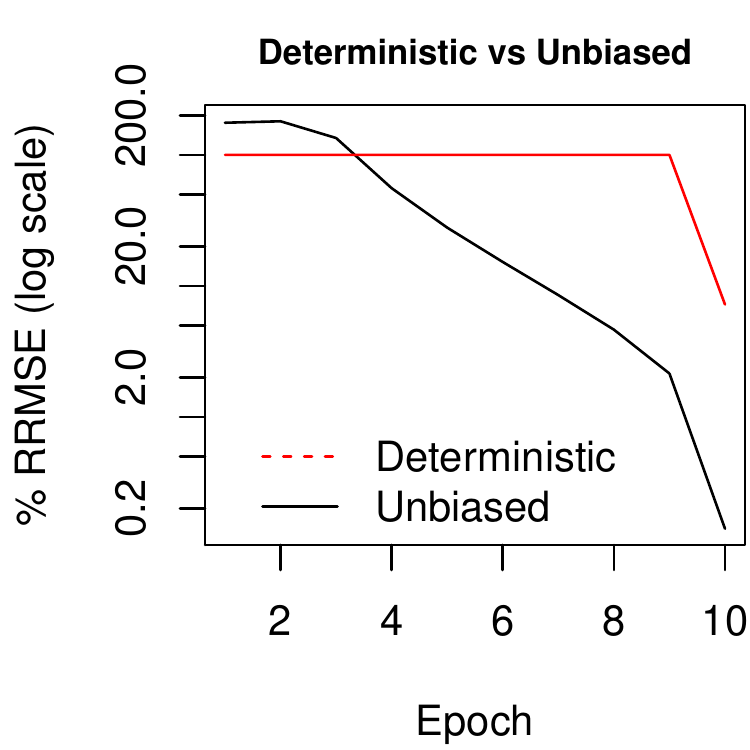} 
			\end{center}
	\caption{Deterministic Space Saving gives completely incorrect answers on all epochs. 
		For epochs 9 and 10, the error is 50x that of Unbiased Space Saving. For extremely small counts constituting $<  0.002\%$ of the total, the possibility of overestimation causes Unbiased Space Saving to have worse error compared to the 0 estimate given by Deterministic Space Saving.}
	\label{fig:RRMSE}
\end{figure}

\section{Conclusion}
We have introduced a novel sketch, Unbiased Space Saving, that answers both the disaggregated subset sum and frequent item problems and gives state of the art performance under all scenarios. Surprisingly, for the disaggregated subset sum problem, the sketch can outperform even methods that run on pre-aggregated data. 

We prove that asymptotically, it can answer the frequent item problem for i.i.d. sequences with probability 1 eventually. Furthermore, it gives stronger probabilistic consistency guarantees on the accuracy of the count than previous results for Deterministic Space Saving.  For non-i.i.d. streams, we show that Unbiased Space Saving still has attractive frequent item estimation properties and exponential concentration of inclusion probabilities to 1. 

For the disaggregated subset sum problem, 
we prove that the sketch provides unbiased results. 
For i.i.d. stream, we show that items selected for the sketch are sampled approximately according to an optimal PPS sample.
For non-i.i.d. streams we show that it empirically performs well and is close to a PPS sample even if given a pathological stream
for which Deterministic Space Saving fails badly on.
We derive a variance estimator for subset sum estimation and show that it is nearly equivalent to the estimator for a PPS sample. It is shown to be accurate on pathological sequences and yields confidence intervals with good coverage.

We study Unbiased Space Saving's behavior and connections to other data sketches. In particular, we identify the primary difference between many of the frequent item sketches is a slightly different operation to reduce the number of bins. We use that understanding to provide multiple generalizations to the sketch which allow it to be applied in distributed settings, handle weight decay over time, and adaptively change its size over time. This also allows us to compare Unbiased Space Saving to the family of sample and hold sketches that are also designed to answer the disaggregated subset sum problem. This allows us to also mathematically show that Unbiased Space Saving is superior.

\bibliographystyle{abbrvnat}
\bibliography{ling2} 

\end{document}